\newtheorem{theorem}{\bf Theorem}
\newtheorem{lemma}{\bf Lemma}
\newtheorem{remark}{\bf Remark}
\newtheorem{assumption}{\bf Assumption}
\begin{document}
	\title{Composite Learning Adaptive Control \textcolor{blue}{under Non-Persistent Partial Excitation} }
	\author{
		Jiajun Shen, Wei Wang, \IEEEmembership{Senior Member, IEEE}, Changyun Wen, \IEEEmembership{Fellow, IEEE}, Jinhu L\"u, \IEEEmembership{Fellow, IEEE}
		\thanks{This work was supported in part by National Natural Science Foundation of China under Grant 62373019, in part by Fundamental Research Funds for the Central Universities under Grants TKF-2025010466166 and JKF-2025033060898, and in part by Academic Excellence Foundation of BUAA for PhD Students.
		\textit{(Corresponding author: Jinhu L\"u.)}}
		\thanks{Jiajun Shen, Wei Wang and Jinhu L\"u are with the School of Automation Science and Electrical Engineering, Beihang University, Beijing 100191, P.R. China. Wei Wang and Jinhu L\"u are also with the Hangzhou Innovation Institute, Beihang University, Hangzhou 310051, P.R. China, and the Zhongguancun Laboratory, Beijing 100194, P.R. China. 
		(e-mail: jjshen@buaa.edu.cn; w.wang@buaa.edu.cn; jhlu@iss.ac.cn).}
		\thanks{Changyun Wen is with the School of Electrical and Electronic Engineering, Nanyang Technological University, Nanyang Avenue, 639798, Singapore. (e-mail: ecywen@ntu.edu.sg).}
	}
	\maketitle

	\begin{abstract}
	This paper focuses on \textcolor{blue}{relaxing the excitation conditions for the} adaptive control \textcolor{blue}{of} uncertain nonlinear systems.
	By adopting the spectral decomposition technique, a linear regression equation \textcolor{blue}{(LRE)} is constructed to \textcolor{blue}{quantitatively} collect \textcolor{blue}{historical} excitation information, \textcolor{blue}{based on which the parameter estimation error is decomposed into the excited component and the unexcited component.}
	\textcolor{blue}{By sufficiently utilizing the collected excitation information, the composite learning and $\mu$-modification terms are designed and incorporated} into the ``Lyapunov-based'' parameter update law.
	\textcolor{blue}{By developing a novel Lyapunov function, it} is demonstrated that \textcolor{blue}{under non-persistent partial excitation, the control error and the} excited parameter estimation error component converge to zero, while the unexcited component remains \textcolor{blue}{bounded}.
	\textcolor{blue}{Furthermore, the proposed adaptive control scheme can effectively eliminate the effects of parametric uncertainties and enhance the robustness of the closed-loop systems.}
	Simulation results are provided to verify the theoretical findings.
	\end{abstract}
	\begin{IEEEkeywords}
		 Adaptive control, \textcolor{blue}{excitation conditions}, composite learning, linear regression equation
	\end{IEEEkeywords}

	\section{Introduction}
	Adaptive control is an important discipline in the field of control, which has evolved over several decades.
	Numerous valuable adaptive control schemes have been proposed, which are elaborated in detail in \cite{goodwin1984adaptive,krstic1995nonlinear,ioannou1996robust,aastrom2008adaptive,narendra1989stable,sastry1989adaptive,tao2003adaptive,zhou2008adaptive,ortega2020modified,guo2023composite} and the references therein.
	\textcolor{blue}{Excitation information is an important concept in adaptive control systems, which reflects the effects of parametric uncertainties on system states and serves as the data basis for parameter estimation.
	A substantial number of extant achievements necessitate specific excitation conditions to compensate for the uncertainties, including persistent excitation (PE) \cite{sastry1989adaptive}, sufficient excitation (SE) \cite{krause1987parameter}, and interval excitation (IE) \cite{kreisselmeier1990richness}.
	Specifically, we denote $\phi(x) \in \mathbb{R}^{p \times n}$ as the nonlinear regressor for an unknown constant vector, it is said to satisfy PE / SE / IE conditions if there exist constants $\tau_d, \rho, t_e \in \mathbb{R}^+$ such that
	\begin{align}
		\int_{t-\tau_d}^{t} \phi(x(\tau)\big) \phi(x(\tau)\big)^T d\tau &\geq \rho I_p, \text{ }  
		\forall t \geq 0,   \tag{PE} \label{PE} \\
		\int_{0}^{t} \phi(x(\tau)\big) \phi(x(\tau)\big)^T d\tau &\geq \rho I_p,   \text{ }  
		\forall t \geq t_e,   \tag{SE} \label{SE} \\
		\int_{t_e-\tau_d}^{t_e} \phi(x(\tau)\big) \phi(x(\tau)\big)^T d\tau &\geq \rho I_p.   
		\tag{IE} \label{IE}
	\end{align}
	It is crucial to emphasize that these excitation conditions are challenging to ensure and verify in advance, as they rely on the future system states. 
	Although the exogenous excitation input method \cite{boyd1986necessary} can ensure these excitation conditions, it may lead to excessive energy consumption and unsatisfactory control performance. In what follows, the achievements of adaptive control are comprehensively reviewed from the perspective of relaxing excitation conditions.}
	
	\subsection{Literature Review}
	\textcolor{blue}{Most of the classical adaptive control methods, e.g. standard ``Lyapunov-based'' \cite{krstic1995nonlinear} and ``estimation-based'' \cite{ioannou1996robust} approaches, can eliminate the effects of uncertainties completely if and only if the regressor satisfies the PE conditions. The $\sigma$-modification \cite{ioannou1983adaptive} and $e$-modification \cite{narendra1987a} schemes are presented to construct the negative definite terms of the parameter estimation errors. Although these schemes relax the excitation conditions, a positive term is introduced into the derivative of Lyapunov function, thereby reducing the control performance. The projection operator methods \cite{krstic1995nonlinear} guarantee the boundedness of parameter estimates without imposing excitation conditions, while they require prior knowledge of the unknown parameters and cannot ensure the convergence of parameter estimation. The composite control schemes \cite{slotine1989composite} combine the direct and indirect adaptive laws to improve the estimation performance, while the PE conditions are still required.}
	
	As introduced in the surveys \cite{ortega2020modified,guo2023composite}, several data-driven adaptive control schemes have been proposed to relax the \textcolor{blue}{excitation conditions by utilizing historical excitation information.}
	\textcolor{blue}{The main achievements include} concurrent learning \cite{chowdhary2010concurrent,chowdhary2013concurrent,chowdhary2014exponential,li2022concurrent,long2023filtering,parikh2019integral}, composite learning \cite{pan2016composite,pan2018composite,roy2018combined,pan2017composite,guo2019composite,guo2020composite,cho2018composite,pan2019efficient,pan2022bioinspired}, regulation-triggered batch identifier  \cite{karafyllis2018adaptive,karafyllis2019adaptive,karafyllis2020adaptive,shen2025adaptive} and other approaches \cite{adetola2008finite,ortega2021new,ortega2021parameter}.
	\textcolor{blue}{The concurrent learning and composite learning methods ensure the exponential convergence of parameter estimation, thereby improving the control performance and enhancing the robustness of the closed-loop systems \cite{boyd1986necessary,anderson1977exponential,bodson1984exponential}.
	Next, let's review these techniques from the viewpoints of} excitation collection and excitation utilization.
	
	\textcolor{blue}{Given that the original concurrent learning schemes \cite{chowdhary2010concurrent,chowdhary2013concurrent,chowdhary2014exponential} are challenging to implement because they require the derivatives of the system states, we mainly focus on the integral-based concurrent learning schemes \cite{parikh2019integral}.
	Given an uncertain system $\dot{x}=f(x)+u+\phi(x)^T\theta$ with $\theta$ of an unknown constant vector,} these approaches employ excitation detection algorithms \cite{chowdhary2011singular} \textcolor{blue}{to determine the excitation collection intervals $(t_{i1},t_{i2})$, and construct the LRE $\sum_{i=1}^{N} \varPhi_i \varPi_i= \sum_{i=1}^{N}\varPhi_i \varPhi_i^T \theta$ by utilizing the following equations,
	\begin{align}
		\varPi_i &= x(t_{i2}) - x(t_{i1}) - \int_{t_{i1}}^{t_{i2}} \big( f(x(\tau)) + u(\tau) \big) d\tau,
		\label{V} \\
		\varPhi_i &= \int_{t_{i1}}^{t_{i2}} \phi (x(\tau)) d\tau.
		\label{Phi}
	\end{align} 
	However, some valuable excitation information may be lost due to the detection errors \cite{chowdhary2011singular} and the potential integral cancellation in (\ref{Phi}).}
	\textcolor{blue}{In comparison, the composite learning approaches \cite{pan2016composite} construct the LRE $Z_f(t)=W_f(t)\theta$ by utilizing the following ordinary differential equations (ODEs),
	\begin{align}
		\dot{Z}_f &= \phi_f(x) \varpi_f(x,u), \text{ } Y(0)=0,  \label{Y} \\
		\dot{W}_f &= \phi_f(x) \phi_f(x)^T, \text{ } Q(0)=0,  \label{Q} 
	\end{align} 
	where $\phi_f(x)$ and $\varpi_f(x,u)$ are the filtered signals of $\phi(x)$ and $\dot{x} - f(x) - u$, respectively.
	Note that the vector $Z_f$ and the excitation matrix $W_f$ may become unbounded due to the positive semidefinite term $\phi_f(x) \phi_f(x)^T$ in (\ref{Q}), which may cause numerical divergence. To address this issue,}
	several studies \cite{cho2018composite,pan2019efficient} incorporate forgetting factors to ensure \textcolor{blue}{the boundedness of the LRE}.
	\textcolor{blue}{Moreover, \cite{lee2019concurrent,goel2020recursive,lai2024sift} present directional forgetting schemes to attenuate the collected excitation information solely along the present excitation directions, thereby avoiding the unexpected attenuation of other excitation directions.
	However, these excitation forgetting approaches} have failed to consider the variation in richness of the collected excitation information across different directions.
	When the excitation information pertaining to strong \textcolor{blue}{excitation} directions is forgotten, the weak \textcolor{blue}{excitation} directions may be unexpectedly attenuated \textcolor{blue}{simultaneously}. 
	By utilizing the LRE, the concurrent learning and composite learning terms are designed to enhance the estimation and control performance.
 	\textcolor{blue}{It is demonstrated that the resulting closed-loop systems will be exponentially stable once the \ref{SE} and \ref{IE} conditions are satisfied.}
 	However, the \ref{SE} and \ref{IE} conditions are still restrictive in many practical applications.
 	
 	\textcolor{blue}{Noticing that} the main objective of adaptive control is to ensure control performance rather than \textcolor{blue}{estimation accuracy, several partial identification schemes \cite{bittanti1990recursive,bittanti1992effective,kogan1996locally,li1998geometric} were proposed.
 	The parameter estimation errors are decomposed into the persistently excited and the non-persistently excited components.
 	It is demonstrated that by utilizing these adaptive control schemes, the persistently excited component converges to zero.
 	Moreover, $\mu$-modification methods \cite{uzeda2023adaptive,uzeda2023robust,uzeda2025estimation} are presented to guarantee the boundedness of the non-persistently excited component and enhance the robustness of the closed-loop systems.
 	However, these approaches generally need restrictive partial persistent excitation conditions and cannot completely eliminate the effects of parametric uncertainties.}
 	
 	\subsection{Contributions}
 	\textcolor{blue}{Motivated by the discussions mentioned above, this paper presents a composite learning adaptive control scheme for uncertain systems under non-persistent partial excitation.
 	By employing the spectral decomposition technique, a LRE is constructed to quantitatively collect historical excitation information.} 
 	The parameter estimation error $\tilde{\theta}$ is orthogonally decomposed into the excited component $\tilde{\theta}_e$ and the unexcited component $\tilde{\theta}_u$, \textcolor{blue}{where $\tilde\theta_e$ affects control performance and $\tilde\theta_u$ does not. Specifically, we have $\exists \tau \in [0,t]$, $\phi (x(\tau))^T \tilde{\theta}_e(t) \neq 0$ and $\phi (x(\tau))^T \tilde{\theta}_u(t) = 0$, $\forall \tau \in [0,t]$.
	The parameter update law is designed based on composite learning and $\mu$-modification techniques, which sufficiently utilize the collected excitation information.}
	The contributions are mainly three-folds.
	
	\begin{itemize}  
		\item [$\bullet$] 
		In contrast to the existing excitation collection schemes \cite{chowdhary2010concurrent,parikh2019integral,pan2016composite,cho2018composite,pan2019efficient,lee2019concurrent,goel2020recursive,lai2024sift}, \textcolor{blue}{this paper employs} spectral decomposition technique to decompose the collected excitation information onto independent excitation directions and design independent forgetting factors. It is illustrated that all the spectra of \textcolor{blue}{historical} excitation information are \textcolor{blue}{sufficiently} collected and the boundedness of the LRE is \textcolor{blue}{ensured}.
		\item [$\bullet$] 
		Compared with the existing composite learning adaptive control schemes \cite{pan2016composite,pan2018composite,roy2018combined,pan2017composite,guo2019composite,guo2020composite,cho2018composite,pan2019efficient,pan2022bioinspired}, \textcolor{blue}{the excitation conditions are further relaxed.}
		\textcolor{blue}{It is demonstrated that under non-persistent partial excitation, the control error} and the excited parameter estimation error component converge exponentially to zero, and the unexcited parameter estimation error component \textcolor{blue}{remains bounded}.
		\textcolor{blue}{The effects of uncertainties can be completely eliminated and the robustness of the closed-loop systems is enhanced.}
		\item [$\bullet$]
		\textcolor{blue}{By incorporating the RBFNN and dynamic surface control techniques, the proposed control scheme is extended to high-order systems with unstructured uncertainties. It is demonstrated that the closed-loop system is semi-globally stable under non-persistent partial excitation.}
	\end{itemize}
	
	The rest of this paper is organized as follows. 
	\textcolor{blue}{In Section \uppercase\expandafter{\romannumeral2}, we formulate the problems and present the composite learning adaptive control scheme.}
	In Section \uppercase\expandafter{\romannumeral3}, we present a composite learning adaptive dynamic surface control scheme for high-order systems with unstructured uncertainties.
	In Section \uppercase\expandafter{\romannumeral4}, simulation results are provided \textcolor{blue}{to validate the theoretical findings}, followed by a conclusion drawn in Section \uppercase\expandafter{\romannumeral5}.
	
	\textbf{Notations.}
	The sets of real numbers and non-negative real numbers are denoted by $\mathbb{R}$ and $\mathbb{R}^+$, respectively.
	The set of positive integers is denoted by $\mathbb{Z}^+$.
	For a vector $x \in \mathbb{R}^n$, $\|x\|$ denotes the Euclidean norm of $x$.
	Given a matrix $W \in \mathbb{R}^{n \times n}$, ${\rm rank}[W]$ denotes the rank of $W$, $\mathcal{R}[W]$ and $\mathcal{N}[W]$ denote the range space and null space of $W$, respectively.
	Denote $\lambda_k$, $k=1,2,...,h$ as the $h$ distinct eigenvalues of $W$, the eigen-space corresponding to $\lambda_k$ is denoted by $\mathcal{E}(\lambda_k)$.
	\textcolor{blue}{$I_n \in \mathbb{R}^{n \times n}$ and $O^{m \times n}$ represent the identity matrix and the zero matrix, respectively.}
	Given a space $\mathcal{S} \subset \mathbb{R}^n$, ${\rm Proj}(x,\mathcal{S})$ denotes the orthogonal projection of $x$ on  $\mathcal{S}$, $\dim(\mathcal{S})$ denotes the dimension of $\mathcal{S}$.
	\textcolor{blue}{Given a signal $\zeta(t)$ that is discontinuous at moment $\tau$, its left and right values are denoted by $\zeta(\tau^-)$ and $\zeta(\tau^+)$, respectively.}

	\section{Composite Learning Adaptive Control}
	Consider a first-order uncertain system modeled as follows,
	\begin{equation}
		\dot{x} = f(x) + u + \phi(x)^T \theta, 
		\label{model}
	\end{equation}
	where $x \in \mathbb{R}^n$ and $u \in \mathbb{R}^n$ are the system state and control input, respectively.
	$f : \mathbb{R}^n \rightarrow \mathbb{R}^n$ is a known mapping, $\phi : \mathbb{R}^n \rightarrow \mathbb{R}^{p \times n}$ is a nonlinear regressor.
	$\theta \in \mathbb{R}^p$ denotes the vector of unknown constant parameters, the dimension $p \in \mathbb{Z}^+$ is the number of unknown parameters.
	
	\subsection{Problem Formulation and Preliminaries}
	Denote the reference signal of state $x$ as $x_r(t)$ and the estimate of unknown vector $\theta$ as $\hat{\theta}$. The tracing control error and the parameter estimation error are defined as follows,
	\begin{align}
		e &= x-x_r, \label{control-error}  \\
		\tilde{\theta} &= \theta - \hat{\theta}.  \label{estimation-error}
	\end{align} 
	
	It will be shown in later analysis that the parameter estimation error $\tilde{\theta}$ can be orthogonally decomposed into the excited component $\tilde{\theta}_e$ and the unexcited component $\tilde{\theta}_u$.
	The latter component $\tilde{\theta}_u$ is defined as the one \textcolor{blue}{does not affected the control performance,} in the sense that  $\phi (x(\tau))^T \tilde{\theta}_u(t) = 0$ is satisfied for all $\tau \in [0,t]$.
	\textcolor{blue}{Noticing that the main objective of adaptive control is to ensure control performance rather than estimation accuracy, the bounded unexcited component $\tilde{\theta}_u$ can generally be tolerated.}
	
	The \textcolor{blue}{\textit{design goals}} in this \textcolor{blue}{section} is to design a LRE and \textcolor{blue}{a composite learning} adaptive controller for uncertain \textcolor{blue}{nonlinear} system (\ref{model}) such that the following results can be obtained \textcolor{blue}{under non-persistent partial excitation.} 
	\begin{itemize}
		\item [$\bullet$]
		All the spectra of \textcolor{blue}{historical} excitation information are sufficiently collected into the LRE.
		\item [$\bullet$]
		\textcolor{blue}{All the signals in the closed-loop system remain bounded.}
		\item [$\bullet$]
		\textcolor{blue}{The tracking control error (\ref{control-error}) and the excited parameter estimation error component $\tilde{\theta}_e$ converge exponentially to zero, i.e., $\lim_{t \rightarrow \infty} e(t) \rightarrow O^{n \times 1}$, $\lim_{t \rightarrow \infty} \tilde{\theta}_e(t) \rightarrow O^{p \times 1}$.}
	\end{itemize}
	
	To achieve the \textcolor{blue}{design goals}, we introduce several lemmas and impose a necessary assumption \textcolor{blue}{in the following}, which will play essential roles in control design and stability analysis.
	To avoid distracting the readers from the main content, the proof of Lemma 3 is given in Appendix.
	\begin{lemma}
		\hspace{-0.25cm} \cite{lancaster1985theory}
		Consider a normal matrix $W \in \mathbb{R}^{p \times p}$ that has $h$ distinct eigenvalues $\lambda_k$, $k=1,2,...,h$.
		Denote $e_{k,l}$, $l=1,2,...,k_l$ as a group of unit orthogonal basis of the eigen-space corresponding to the eigenvalue $\lambda_k$.
		The spectral matrix $E_k \in \mathbb{R}^{p \times p}$ is defined as $E_k=\sum_{l=1}^{k_l} e_{k,l}e_{k,l}^T$, then \textcolor{blue}{the spectral decomposition form can be expressed as} $W=\sum_{k=1}^{h} \lambda_k E_k$.
		The set of spectral matrices $\left\{ E_1,E_2,...,E_h \right\}$ is unique \textcolor{blue}{and has} the following properties,
		\begin{align}
			\sum_{k=1}^{h} E_k &= I, \label{prop_0}  \\
			E_k &= E_k^T=\left(E_k\right)^2, \text{ }  \forall k=1,2,...,h, \label{prop_1} \\
			E_{k_1} E_{k_2} &= 0^{p \times p}, \text{ }  \forall k_1,k_2=1,2,...,h, \text{ } k_1 \neq k_2. \label{prop_2} 
		\end{align}
		From (\ref{prop_1}), the spectral matrices $E_k$, $k=1,2,...,h$ are orthogonal projection (idempotent) matrices.
	\end{lemma}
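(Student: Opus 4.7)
The plan is to invoke the spectral theorem for real normal matrices to obtain an orthonormal eigenbasis of $\Re^p$, verify the claimed properties of the spectral matrices $E_i$ by direct computation on this explicit projector form, and finally establish uniqueness through a Lagrange-interpolation argument that rewrites each $E_i$ as a polynomial in $W$, eliminating its apparent dependence on basis choice.

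First, I would appeal to the spectral theorem: since $W$ is real normal, it admits an orthonormal basis of eigenvectors, and eigenvectors attached to distinct eigenvalues are automatically mutually orthogonal. Concatenating the chosen bases $\{e_{i,j}\}_{j=1}^{i_k}$ over $i=1,\dots,h$ therefore yields an orthonormal basis of $\Re^p$, giving the orthogonality $e_{i,j}^T e_{k,l}=\delta_{ik}\delta_{jl}$ together with the completeness relation $\sum_{i,j} e_{i,j}e_{i,j}^T=I$.

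Second, I would read off (\ref{prop_0})--(\ref{prop_2}) from these relations. Symmetry $E_i=E_i^T$ is immediate from the definition. For idempotency, expand $E_i^2=\sum_{j,j'}e_{i,j}(e_{i,j}^T e_{i,j'})e_{i,j'}^T$; the inner factor vanishes unless $j=j'$, yielding $E_i^2=E_i$. The same mechanism collapses $E_i E_k$ for $i\neq k$ to the zero matrix. Summing $E_i$ over $i$ then reduces to the basis-completeness relation, so $\sum_i E_i=I$. To recover $W=\sum_i\lambda_i E_i$ I would test both sides on an arbitrary basis vector $e_{k,l}$: the left gives $\lambda_k e_{k,l}$ by the eigenvector identity, while the right gives $\sum_i\lambda_i E_i e_{k,l}=\lambda_k e_{k,l}$ by (\ref{prop_2}) and idempotency. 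Agreement on a basis implies matrix equality.

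Uniqueness is the step I expect to be most subtle, since the explicit formula $E_i=\sum_j e_{i,j}e_{i,j}^T$ ostensibly depends on the chosen orthonormal basis inside the eigenspace $\mathcal{E}(\lambda_i)$. I would resolve this by Lagrange interpolation: suppose a second family $\{F_i\}$ satisfies (\ref{prop_0})--(\ref{prop_2}) and $W=\sum_i\lambda_i F_i$. Properties (\ref{prop_1}), (\ref{prop_2}) yield $W^m=\sum_i\lambda_i^m F_i$ for every integer $m\ge 0$, hence $p(W)=\sum_i p(\lambda_i)F_i$ for any polynomial $p$. Choosing $p_j(\lambda)=\prod_{i\neq j}(\lambda-\lambda_i)/(\lambda_j-\lambda_i)$ isolates $p_j(W)=F_j$; the identical derivation with $\{E_i\}$ gives $p_j(W)=E_j$, so $F_j=E_j$. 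This simultaneously confirms that the constructed $E_i$ depends only on $W$, not on any auxiliary basis choice inside $\mathcal{E}(\lambda_i)$.
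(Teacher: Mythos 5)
The paper offers no proof of this lemma: it is quoted as a standard result from the matrix-theory literature (Lancaster and Tismenetsky), so there is no in-paper argument to compare against. Your proposal is a correct and essentially textbook derivation — the projector properties follow by direct computation from orthonormality and completeness of the eigenbasis, and the Lagrange-interpolation argument is exactly the right way to get uniqueness and basis-independence of the $E_i$, since it exhibits each spectral projector as a polynomial in $W$ alone. One small caveat worth flagging: a real normal matrix can have non-real eigenvalues (e.g.\ a rotation), in which case there is no orthonormal basis of \emph{real} eigenvectors and the decomposition as stated, with $E_i\in\Re^{p\times p}$, does not literally apply; the lemma implicitly assumes a real spectrum. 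This is an imprecision inherited from the statement rather than a gap in your argument, and it is immaterial here because every $W(t)$ the paper uses is symmetric positive semi-definite, for which your proof goes through verbatim.
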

	
	\begin{lemma}
		\hspace{-0.25cm} \cite{horn2013matrix}
		For a given matrix $W \in \mathbb{R}^{p \times p}$, the range space $\mathcal{R}[W]$ is the orthogonal complement of the null space $\mathcal{N}[W^T]$, i.e., $\mathcal{R}[W] = \mathcal{N}[W^T]^{\bot}$.
	\end{lemma}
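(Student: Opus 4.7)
The plan is to prove the two set inclusions $\mathcal{R}[W] \subseteq \mathcal{N}[W^T]^\perp$ and $\mathcal{N}[W^T]^\perp \subseteq \mathcal{R}[W]$, since by definition the orthogonal complement of $\mathcal{N}[W^T]$ is the subspace of $\Re^p$ consisting of all vectors orthogonal to every element of $\mathcal{N}[W^T]$.

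First I would establish the forward inclusion $\mathcal{R}[W] \subseteq \mathcal{N}[W^T]^\perp$ by a direct computation. Take any $y \in \mathcal{R}[W]$, so that $y = Wx$ for some $x \in \Re^p$, and any $z \in \mathcal{N}[W^T]$, so that $W^T z = 0$. Then the inner product satisfies
\begin{equation*}
y^T z = (Wx)^T z = x^T (W^T z) = 0,
\end{equation*}
which shows $y \perp z$ for every $z \in \mathcal{N}[W^T]$, hence $y \in \mathcal{N}[W^T]^\perp$.

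Next I would upgrade the inclusion to an equality by a dimension-counting argument, rather than chasing a reverse inclusion elementwise. Applying the rank-nullity theorem to $W^T \in \Re^{p \times p}$ gives $\dim \mathcal{N}[W^T] + \mathrm{rank}[W^T] = p$, so $\dim \mathcal{N}[W^T]^\perp = p - \dim \mathcal{N}[W^T] = \mathrm{rank}[W^T]$. Using the standard identity $\mathrm{rank}[W^T] = \mathrm{rank}[W] = \dim \mathcal{R}[W]$, I conclude $\dim \mathcal{R}[W] = \dim \mathcal{N}[W^T]^\perp$. Combined with the forward inclusion of the previous paragraph, this forces $\mathcal{R}[W] = \mathcal{N}[W^T]^\perp$, which is the claim.

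There is no real obstacle here: the statement is the classical fundamental theorem of linear algebra and the proof is essentially two lines plus a dimension count. The only point deserving care is that the equality of row rank and column rank, i.e. $\mathrm{rank}[W] = \mathrm{rank}[W^T]$, should be invoked explicitly as a known fact, so that the argument does not tacitly assume what it intends to prove about orthogonal decompositions.
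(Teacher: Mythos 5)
Your proof is correct. Note that the paper itself offers no proof of this lemma: it is stated as a known fact with a citation to Horn and Johnson's \emph{Matrix Analysis}, so there is nothing internal to compare against. Your argument is the standard textbook one — the forward inclusion $\mathcal{R}[W] \subseteq \mathcal{N}[W^T]^\perp$ via the identity $(Wx)^T z = x^T(W^Tz)$, followed by a dimension count using rank--nullity and $\mathrm{rank}[W]=\mathrm{rank}[W^T]$ — and it is complete and sound. Your closing caveat is well placed: the only nontrivial external input is the equality of row and column rank, and you correctly flag it as an invoked known fact rather than something derived from the orthogonal decomposition being established.
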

	
	\begin{lemma}
		\textcolor{blue}{Given} a vector $\nu \in \mathbb{R}^p$ and a positive semidefinite matrix $W \in \mathbb{R}^{p \times p}$ \textcolor{blue}{that has} $h$ distinct eigenvalues \textcolor{blue}{$\lambda_k$, $k=1,2,...,h$, where $\lambda_1$ denotes} the zero eigenvalue.
		If the vector $\nu$ satisfies ${\rm Proj} \left( \nu, \mathcal{E}(\lambda_1) \right)=0$, the inequality $\nu^T W \nu \geq \lambda_{\min}^+(W)\nu^T\nu$ \textcolor{blue}{holds}, where $\lambda_{\min}^+(W)$ represents the smallest positive eigenvalue of $W$.
	\end{lemma}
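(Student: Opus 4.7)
The plan is to invoke the spectral decomposition of Lemma~1 to diagonalize the quadratic form $\nu^T W \nu$ along the eigen-directions of $W$, then exploit the hypothesis that $\nu$ has no component in the null eigen-space to bound each surviving term from below by $\lambda_{\min}^+(W)$.

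First I would write $W = \sum_{i=1}^{h} \lambda_i E_i$ with the spectral matrices $E_i$ supplied by Lemma~1, ordered so that $\lambda_1 = 0$ and $\lambda_2,\dots,\lambda_h > 0$ (recall $W$ is positive semi-definite, so every nonzero eigenvalue is strictly positive). Since $E_1$ is the orthogonal projector onto $\mathcal{E}(\lambda_1)$, the hypothesis ${\rm Proj}(\nu,\mathcal{E}(\lambda_1))=0$ translates directly into $E_1 \nu = 0$. Combining this with the resolution of identity $\sum_{i=1}^{h} E_i = I$ from (\ref{prop_0}) gives $\nu = \sum_{i=2}^{h} E_i \nu$, and the mutual orthogonality (\ref{prop_2}) together with the idempotence and symmetry (\ref{prop_1}) of the $E_i$ yields the Pythagorean identity $\nu^T \nu = \sum_{i=2}^{h} \|E_i \nu\|^2$.

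Next I would expand the quadratic form. Using again (\ref{prop_1})--(\ref{prop_2}), we have $\nu^T E_i \nu = \nu^T E_i^T E_i \nu = \|E_i \nu\|^2$, so
\begin{equation*}
\nu^T W \nu = \sum_{i=1}^{h} \lambda_i \|E_i \nu\|^2 = \sum_{i=2}^{h} \lambda_i \|E_i \nu\|^2,
\end{equation*}
where the $i=1$ term vanishes both because $\lambda_1 = 0$ and because $E_1\nu = 0$. Since each $\lambda_i \geq \lambda_{\min}^+(W)$ for $i \geq 2$, bounding term-by-term gives $\nu^T W \nu \geq \lambda_{\min}^+(W) \sum_{i=2}^{h} \|E_i \nu\|^2 = \lambda_{\min}^+(W)\, \nu^T \nu$, which is the claimed inequality.

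The argument is essentially bookkeeping once the spectral decomposition is in hand, so I do not expect a real obstacle; the only point that needs care is to justify that the projector onto $\mathcal{E}(\lambda_1)$ coincides with the spectral matrix $E_1$ defined in Lemma~1, which follows because $E_1$ is constructed from a unit orthogonal basis of $\mathcal{E}(\lambda_1)$ and is idempotent and symmetric by (\ref{prop_1}).
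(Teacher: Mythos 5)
Your proof is correct and follows essentially the same route as the paper's: the paper likewise decomposes $\nu$ into its orthogonal projections $\nu_i={\rm Proj}(\nu,\mathcal{E}(\lambda_i))$ (which are exactly your $E_i\nu$), uses $\nu_1=0$, and bounds $\nu^T W\nu=\sum_{i\geq 2}\lambda_i\nu_i^T\nu_i$ term by term by $\lambda_{\min}^+(W)$. The only cosmetic difference is that you phrase the decomposition explicitly through the spectral projectors of Lemma~1 rather than through the projection notation.
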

	
	\begin{assumption}
		The	reference signal $x_r(t)$ and its derivative are piecewise continuous and bounded.
	\end{assumption}
	
	\textcolor{blue}{This section presents a novel composite learning adaptive control framework, as shown in Fig. \ref{liucheng}. The LRE quantitatively collects historical excitation information, the parameter update law utilizes the previously collected excitation information to update the parameter estimate $\hat{\theta}$, the certainty equivalence (CE) control law utilizes the parameter estimate to generate the control input signal.}
	\begin{figure}[!htbp]
		\vspace{-0.2cm}
		\centering
		\includegraphics[width=8.75cm]{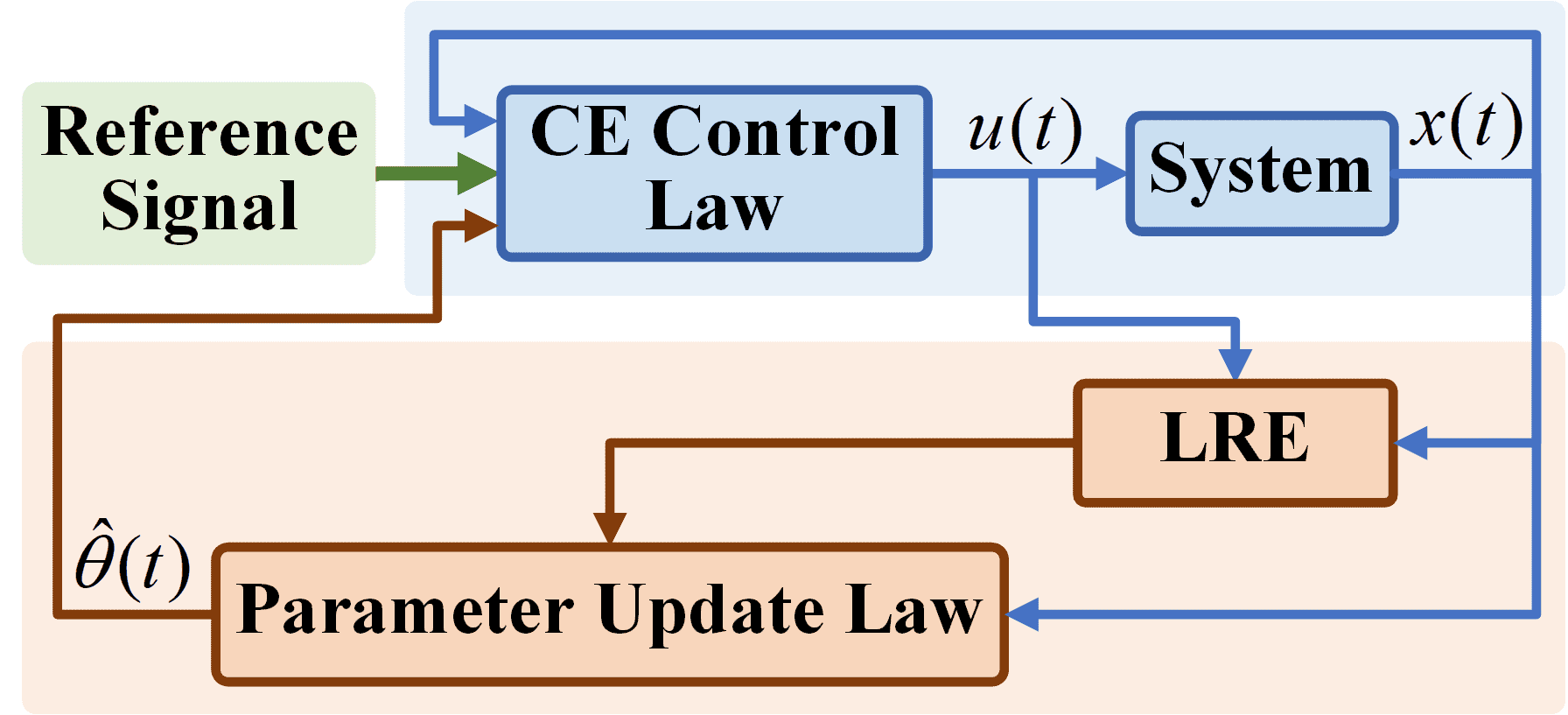}
		\caption{\textcolor{blue}{Illustrating figure for the proposed composite learning adaptive control framework.}}
		\label{liucheng}
		\vspace{-0.2cm}
	\end{figure}

	\subsection{Design and Analysis of Linear Regression Equation}
	\textcolor{blue}{The LRE will be constructed as $Z(t)=W(t)\theta$, where the vector $Z(t) \in \mathbb{R}^p$ and the excitation matrix $W(t) \in \mathbb{R}^{p \times p}$ will be designed subsequently.}
	\textcolor{blue}{We will utilize the spectral decomposition technique to} orthogonally decompose the \textcolor{blue}{LRE and the} previously collected excitation information onto several \textcolor{blue}{excitation} directions \textcolor{blue}{according to excitation richness}.
	\textcolor{blue}{Then the LRE will be designed to} collect the present excitation information while weakening the previously collected excitation \textcolor{blue}{information} of some \textcolor{blue}{excitation} directions, on which the excitation is already \textcolor{blue}{sufficiently rich} for parameter estimation.

	The regressor $\phi(x)$ \textcolor{blue}{is generally regarded as the} carrier of excitation information \textcolor{blue}{in system identification and adaptive control techniques}.
	\textcolor{blue}{Moreover, the unknown vector $\theta$ is identifiable if and only if it belongs to the range space of the regressor.}
 	Similar to the existing excitation collection methods \cite{chowdhary2010concurrent,parikh2019integral,pan2016composite,cho2018composite,pan2019efficient,lee2019concurrent,goel2020recursive,lai2024sift}, the \textcolor{blue}{excitation} matrix $W(t)$ will collect the spectra of the regressor.
	\textcolor{blue}{For easier introduction of the design procedure, $W(t)$ is provisionally} defaulted to be positive semidefinite.
	Note that this is not an additional assumption, \textcolor{blue}{and it will be rigorously proved once we complete the design of LRE.}
	\textcolor{blue}{Before presenting the expressions of $Z(t)$ and $W(t)$, we first introduce}
	the transformed system model and the decomposed LRE.
	\\
	\textbf{Transformed system model:}
	\textcolor{blue}{By subtracting $\big(f(x)+u\big)$ from both sides of (\ref{model}) and} multiplying both sides on the left by $\phi(x)$, \textcolor{blue}{the transformed system model is obtained as follows,}
	\begin{equation}
		\varpi(\dot{x},x,u) = \phi(x) \phi(x)^T\theta,
		\label{transfor_1}
	\end{equation}
	where $\varpi(\dot{x},x,u)=\phi(x) \left(\dot{x}-f(x) - u\right)$ is defined for easier formulation of the design procedure.
	\\
	\textbf{Decomposed LRE:}
	\textcolor{blue}{For a specific moment $t$, denote the LRE as $Z(t)=W(t) \theta$.}
	From Lemma 1, the spectral decomposition form of \textcolor{blue}{the excitation} matrix $W(t)$ is given as follows,
	\begin{equation}
		W(t)=\sum_{k=1}^{h(t)} \lambda_k(t) E_k(t). \label{LRE_2}
	\end{equation}
	\textcolor{blue}{ Noticing that} the spectral matrices $E_k(t)$, $k=1,2,...,h(t)$ are constructed by the unit orthogonal eigenvectors of different eigenvalues, \textcolor{blue}{they represent} the \textcolor{blue}{independent excitation} directions of the previously collected excitation information.
	Moreover, the eigenvalues \textcolor{blue}{represent the excitation richness of these excitation} directions.
	Specially, the spectral matrix of the zero eigenvalue represents the \textcolor{blue}{excitation} direction, on which no excitation information has been collected yet.
	From (\ref{prop_0}) and (\ref{LRE_2}), the LRE is transformed to the following form,
	\begin{equation}
		Z(t) = \sum_{k=1}^{h(t)} E_k(t) Z(t) 
			 = \sum_{k=1}^{h(t)} \lambda_k(t) E_k(t) \theta.
		\label{decomposition}
	\end{equation}
	\textcolor{blue}{Noticing the orthogonal property of the spectral matrices} in (\ref{prop_2}), the transformed LRE (\ref{decomposition}) can be decomposed onto the $h(t)$ independent excitation directions, as shown below,
	\begin{equation}
		E_k(t) Z(t) = \lambda_k(t) E_k(t) \theta, \text{ } k=1,2,...,h(t).    \label{direction}
	\end{equation}
	\textbf{Design of the LRE:}
	According to the transformed system model (\ref{transfor_1}) and the decomposed LRE (\ref{direction}), we have
	\begin{equation}
		\begin{aligned}
			&\varpi(\dot{x},x,u) - \sum_{k=1}^{h(t)} \textcolor{blue}{\frac{\beta(\lambda_k,x)}{\lambda_k(t)} E_k(t) Z(t)} \\
			=& \left(\phi(x) \phi(x)^T- \sum_{k=1}^{h(t)} \textcolor{blue}{\beta(\lambda_k,x) E_k(t)} \right) \theta,
		\end{aligned}
		\label{transfor_2}
	\end{equation}
	where $\beta: \mathbb{R}^+ \times \mathbb{R}^n \rightarrow \mathbb{R}$ is the forgetting factor function, which will be designed later.
	Integrating (\ref{transfor_2}) \textcolor{blue}{over} the interval $[0,t]$, we obtain the following LRE,
	\begin{equation}
		Z(t) = W(t) \theta,
		\label{transfor_9}
	\end{equation}
	where the vector $Z(t)$ and \textcolor{blue}{the excitation matrix} $W(t)$ can be obtained \textcolor{blue}{in real time} by employing the following ODEs,
	\begin{align}
		\dot{Z} &= \varpi(\dot{x},x,u) - \sum_{k=1}^{h(t)} \textcolor{blue}{\frac{\beta(\lambda_k,x)}{\lambda_k(t)} E_k(t) Z(t)}, \label{Z} \\
		\dot{W} &= \phi(x) \phi(x)^T- \sum_{k=1}^{h(t)} \textcolor{blue}{\beta(\lambda_k,x) E_k(t)},  \label{W} 
	\end{align}
	where the initial values are set as $Z(0) = O^{p \times 1}$ and $W(0) = O^{p \times p}$.
	The first terms in (\ref{Z}) and (\ref{W}) collect the present excitation information, while the second terms are added to guarantee the boundedness of $Z(t)$ and $W(t)$.
	\\
	\textbf{Forgetting factor:}
	Inspired by the related results \cite{cho2018composite,pan2019efficient}, the forgetting factor function $\beta(\lambda_k,x)$ is designed as follows,
	\begin{equation}
		\beta(\lambda_k,x)=
		\begin{cases}
			0, \text{ } {\rm if} \text{ } \lambda_k \leq \sigma_{\min},  \\
			\frac{\lambda_{\max}\left(\phi(x)\phi(x)^T\right)}{2} 
			\left( {\rm sat} \left( \frac{\lambda_k-\mu}{\omega} \right)+1 \right), {\rm otherwise},
		\end{cases}
		\label{forgetting}
	\end{equation}
	where $\lambda_{\max}\left(\phi(x)\phi(x)^T\right)$ denotes the maximum eigenvalue of the matrix $\phi(x)\phi(x)^T \in \mathbb{R}^{p \times p}$. 
	$\mu=\frac{\sigma_{\max}+\sigma_{\min}}{2}$ and $\omega=\frac{\sigma_{\max}-\sigma_{\min}}{2}$ represent the center and width of the saturation function, respectively.
	Here, $\sigma_{\min} \in \mathbb{R}^+$ and $\sigma_{\max} \in \mathbb{R}^+$ are \textcolor{blue}{design parameters, the selection of which will be discussed after we complete the design and analysis.}
	The saturation function is presented as follows,
	\begin{equation}
		{\rm sat}(y)= \begin{cases}
			-1, &\text{if } y \leq -1, \\
			y,  &\text{if } -1<y<1, \\
			1,  &\text{if } y \geq 1.
		\end{cases} \label{saturation}
	\end{equation}
	The relationship between the \textcolor{blue}{forgetting factor} $\beta(\lambda_k,x)$ and the eigenvalue $\lambda_k$ is illustrated in Fig \ref{factor}.
	
	\begin{figure}[!htbp]
		\vspace{-0.2cm}
		\centering
		\includegraphics[width=7.5cm]{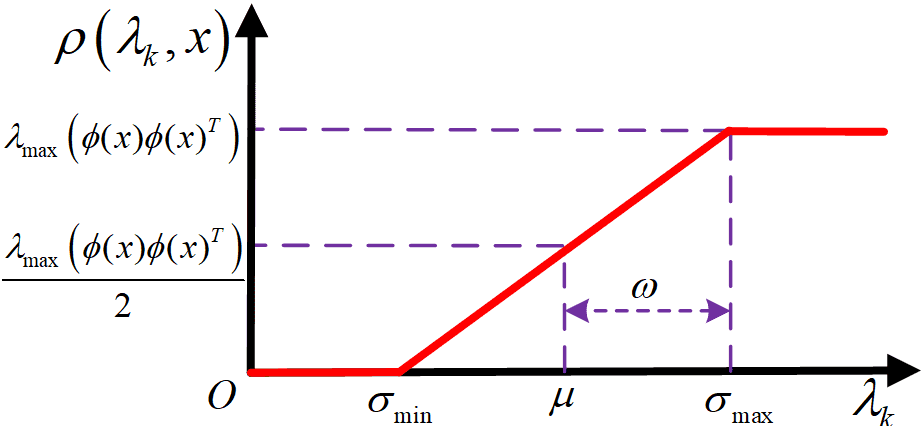}
		\vspace{-0.2cm}
		\caption{\textcolor{blue}{Illustrating figure for the forgetting factor}}
		\label{factor}
		\vspace{-0.2cm}
	\end{figure}
	
	According to the definitions of the forgetting factor in (\ref{forgetting}) and \textcolor{blue}{the excitation} matrix $W(t)$ in (\ref{W}), the \textcolor{blue}{previously collected} excitation information that belongs to the \textcolor{blue}{relatively weak excitation directions ($\lambda_k \leq \sigma_{\min}$) will not be attenuated, while that belongs to the strong excitation directions ($\lambda_k > \sigma_{\min}$)} will be forgotten with the rate $\beta(\lambda_k,x)$.

	\begin{remark}
		As discussed in the related results \cite{li2022concurrent,long2023filtering,parikh2019integral}, the derivative of $x$ in $\varpi(\dot{x},x,u)$ and (\ref{Z}) may be unavailable in many practical applications.
		\textcolor{blue}{The existing results generally utilize stable low-pass filters to construct the transformed system model and further avoid the direct utilization of $\dot{x}$, as introduced in (\ref{Y}) and (\ref{Q}).}
		However, we have found that \textcolor{blue}{for the integrator system models (\ref{model})}, the vector $Z(t)$ can also be obtained via ``integration by substitution'' method in \cite{hass2024thomas}, as shown below,
		\textcolor{blue}{
		\begin{align}
			Z(t)
			=& \int_{0}^{t} \varpi(\dot{x}(\tau),x(\tau),u(\tau)) - \sum_{k=1}^{h(\tau)} \frac{\beta(\lambda_k,x)}{\lambda_k(\tau)} E_k(\tau) Z(\tau) d\tau  \nonumber \\ 
			=& \int_{x(0)}^{x(t)}  \phi(x) dx - \int_{0}^{t} \phi (x(\tau)) \big(f(x(\tau))+u(\tau)\big) d\tau  \nonumber \\
			&- \sum_{k=1}^{h(t)} \int_{0}^{t} \frac{\beta(\lambda_k,x)}{\lambda_k(\tau)} E_k(\tau) Z(\tau) d\tau. \label{remark_2} 
		\end{align} 
		By avoiding the utilization of the stable low-pass filters, the above calculation method (\ref{remark_2}) can reduce the computational cost to a certain extent.}
		The effectiveness of this method will be demonstrated in the simulation section.
	\end{remark}
	
	The main \textcolor{blue}{results} of the LRE are \textcolor{blue}{summarized} in Theorem 1, which will be employed in the \textcolor{blue}{subsequent} design and analysis. 
	\begin{theorem}
		Considering the uncertain system (\ref{model}), with the proposed LRE (\ref{transfor_9}), the following results can be obtained.
		\begin{itemize}  
			\item [\textbf{(i)}] 
			\textcolor{blue}{The excitation matrix $W(t)$ remains positive semidefinite.}
			\item [\textbf{(ii)}] 
			If the regressor $\phi(x)$ is bounded, \textcolor{blue}{all the elements in $Z(t)$ and $W(t)$ remain} bounded.
			\item [\textbf{(iii)}] 
			For arbitrary moment $\tau \in [0,t]$, the range space of $\phi (x(\tau))$ is a subset of that of $W(t)$.
			\item [\textbf{(iv)}] 
			Given a constant vector $\varphi \in \mathbb{R}^p$, $W(t) \varphi =0$ is equivalent to $\phi (x(\tau))^T \varphi =0$ for all $\tau \in [0,t]$.
		\end{itemize}
	\end{theorem}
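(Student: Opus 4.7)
The plan is to first establish $W(t)\succeq 0$ for every $t\ge 0$, on which the rest will rely, and then handle the three parts in order. For positive semi-definiteness, fix an arbitrary unit $v\in\mathbb R^{p}$ and set $\eta(\tau):=v^T W(\tau) v$. Differentiating (\ref{W}) gives
\[
\dot{\eta}(\tau) \;=\; \|\phi(x(\tau))^T v\|^2 \;-\; \sum_{j=1}^{h(\tau)} \rho(\lambda_j,x)\lambda_j\,\|E_j v\|^2.
\]
At any instant with $\eta(\tau)=0$, PSD up to that point gives $W(\tau)v=0$, so $v\in\mathcal{E}(\lambda_1)$ with $\lambda_1=0$; then $E_j v=0$ for $j\ne 1$ and $\rho(\lambda_1,x)\lambda_1=0$, leaving $\dot{\eta}(\tau)=\|\phi^T v\|^2\ge 0$. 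A Nagumo tangent-cone invariance argument starting from $\eta(0)=0$ then forces $\eta(\tau)\ge 0$ for all $\tau\ge 0$, so $W(t)\succeq 0$. Part (1) follows by the same forward-invariance argument applied to $\sigma_{\max}I-W(t)\succeq 0$: at any unit $v$ with $W(t)v=\sigma_{\max}v$, the saturation in (\ref{forgetting}) attains its upper value, giving $\rho(\sigma_{\max},x)\sigma_{\max}=\lambda_{\max}(\phi\phi^T)$ and hence $v^T(-\phi\phi^T+\sum_j \rho\lambda_j E_j)v=\lambda_{\max}(\phi\phi^T)-\|\phi^T v\|^2\ge 0$; with $W(0)=0$, no eigenvalue can ever exceed $\sigma_{\max}$.

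For the sufficient direction of Part (3), if $\phi(x(\tau))^T\varphi\equiv 0$ on $[0,t]$, then the first term of $\dot{\eta}$ (with $v=\varphi$) vanishes identically, leaving $\dot{\eta}=-\sum_j\rho(\lambda_j,x)\lambda_j\|E_j\varphi\|^2\le 0$; combined with $\eta(0)=0$ and $\eta\ge 0$, this forces $\eta\equiv 0$ on $[0,t]$, and PSD delivers $W(t)\varphi=0$.

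The main obstacle is the necessary direction of Part (3), where a direct null-space monotonicity argument fails because forgetting could in principle push a positive eigenvalue back to zero. The saving observation is that $\rho(\lambda_j,x)=0$ for $\lambda_j\le\sigma_{\min}$ with $\sigma_{\min}>0$, and $\rho(\lambda_j,x)\lambda_j\le\lambda_{\max}(\phi\phi^T)$ everywhere. Defining $R(W,x):=\sum_j\rho(\lambda_j,x)\lambda_j E_j$, I would combine
\[
\eta(\tau)\;\ge\;\sigma_{\min}\!\!\sum_{j:\lambda_j>\sigma_{\min}}\!\!\|E_j\varphi\|^2\quad\text{and}\quad \varphi^T R(W,x)\varphi\;\le\;\lambda_{\max}(\phi\phi^T)\!\!\sum_{j:\lambda_j>\sigma_{\min}}\!\!\|E_j\varphi\|^2
\]
to obtain the differential inequality
\[
\dot{\eta}(\tau) \;\ge\; \|\phi(x(\tau))^T\varphi\|^2 \;-\; \frac{\lambda_{\max}(\phi(x(\tau))\phi(x(\tau))^T)}{\sigma_{\min}}\,\eta(\tau).
\]
Multiplying by the integrating factor and integrating from $0$ to $t$ gives
\[
\eta(t) \;\ge\; \int_0^t \exp\!\Big(-\!\!\int_s^t \tfrac{\lambda_{\max}(\phi(x(r))\phi(x(r))^T)}{\sigma_{\min}}\,dr\Big)\|\phi(x(s))^T\varphi\|^2\,ds.
\]
If $W(t)\varphi=0$ then $\eta(t)=0$, forcing the non-negative integrand to vanish, and continuity of $x(\cdot)$ yields $\phi(x(s))^T\varphi=0$ for all $s\in[0,t]$. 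Part (2) is then just a dualization: symmetry of $W(t)$ together with Lemma 2 gives $\mathcal{R}[W(t)]=\mathcal{N}[W(t)]^\perp$, so $\mathcal{R}[\phi(x(\tau))]\subseteq\mathcal{R}[W(t)]$ is equivalent to $\mathcal{N}[W(t)]\subseteq\mathcal{R}[\phi(x(\tau))]^\perp$, which is exactly the statement of Part (3) just established.
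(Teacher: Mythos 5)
Your proposal is correct, and for the two substantive parts it takes a genuinely different route from the paper. For part (1) you and the paper use essentially the same boundary argument (at an eigenvalue equal to $\sigma_{\max}$ the forgetting term dominates $\phi\phi^T$ in that eigendirection), though you phrase it as forward invariance of $\{W \preceq \sigma_{\max} I\}$ and, usefully, first establish $W(t)\succeq 0$ by the same Nagumo-type reasoning rather than taking positive semi-definiteness as a default. The real divergence is in the order and mechanism of parts (2) and (3). The paper proves (2) directly: it asserts that ${\rm rank}[W(t)]$ is non-decreasing, decomposes $\phi(x(\tau))$ along the spectral projectors $E_i(\tau)$, and argues case by case ($\lambda_i\le\sigma_{\min}$ versus $\lambda_i>\sigma_{\min}$) that each component lies in $\mathcal{R}[W(t)]$; it then derives (3) from (2) by writing $\phi(x(\tau))=W(t)A(t,\tau)$, with the converse implication justified only by the (unproved) claim that the columns of $W(t)$ are linear combinations of past regressor columns. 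You invert this: you prove (3) first, and your argument for the hard direction $W(t)\varphi=0\Rightarrow\phi(x(\tau))^T\varphi\equiv 0$ — the scalar differential inequality $\dot\eta\ge\|\phi^T\varphi\|^2-\frac{\lambda_{\max}(\phi\phi^T)}{\sigma_{\min}}\eta$ obtained from $\rho(\lambda_j,x)=0$ for $\lambda_j\le\sigma_{\min}$ and $\rho\lambda_j\le\lambda_{\max}(\phi\phi^T)$, integrated with an integrating factor — is a quantitative replacement for the paper's qualitative rank-monotonicity claim, and it closes exactly the gap you correctly identify (that forgetting might in principle drain a direction back to zero). Part (2) then falls out by orthogonal complementation using Lemma 2. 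The paper's route is more constructive and exposes the geometric picture of which spectral components get added to $\mathcal{R}[W(t)]$; yours is more self-contained and makes the "no direction is ever fully forgotten" property a provable consequence of the design of $\rho$ rather than an assertion.
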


	\begin{proof}
		\textbf{(i)}
		\textcolor{blue}{For a specific moment $t$, if $W(t)$ is symmetric, its spectral decomposition (\ref{LRE_2}) exists and the spectral matrices are also symmetric.
		Then, it can be demonstrated from (\ref{W}) that} the derivative of $W(t)$ will be symmetric.
		Noting that the initial value $W(0)=O^{p \times p}$ is symmetric, it can be deduced by employing the recursive analysis method that  $W(t)$ remains symmetric for all $t \geq 0$, which further supports the utilization of spectral decomposition technique in (\ref{LRE_2}). 
		\textcolor{blue}{Moreover, the excitation matrix $W(t)$ remains positive semidefinite, since the term $\phi(x) \phi(x)^T$ in (\ref{W}) is positive semidefinite and the forgetting factor (\ref{forgetting}) equals zero for all $\lambda_k \in \left[0, \sigma_{\min}\right]$.}
		\vspace{0.1cm}
		\\
		\noindent
		\textbf{(ii)}
		\textcolor{blue}{Since the matrix $\phi(x) \phi(x)^T$ in (\ref{W}) is symmetric, it has $p(p+1)/2$ free elements.
		It can be seen from the inequality $h(t) \leq p \leq p(p+1)/2$ that $\phi(x) \phi(x)^T$ may not be linearly expressed by the spectral matrices $E_k(t)$, $k=1,2,...,h(t)$.
		Without loss of generality, let $\bar{\Phi}(t)$ denote the component of $\phi(x) \phi(x)^T$ that is linearly independent of the spectral matrices, we obtain the following decomposition,
		\begin{equation}
			\phi(x) \phi(x)^T = \sum_{k=1}^{h(t)} a_k(t) E_k(t) + \bar{\Phi}(t),
			\label{Phi_decomposition}
		\end{equation}
		where $a_k(t)$, $k=1,2,...,h(t)$ is a group of weight coefficients.
		Substituting (\ref{W}) and (\ref{Phi_decomposition}) into the derivative of (\ref{LRE_2}), we obtain the following equation,
		\begin{equation}
			\begin{aligned}
				\dot{W} 
				&= \sum_{k=1}^{h(t)} \dot{\lambda}_k E_k(t) + \sum_{k=1}^{h(t)} \lambda_k(t) \dot{E}_k \\
				&= \sum_{k=1}^{h(t)} a_k(t) E_k(t) - \sum_{k=1}^{h(t)} \beta(\lambda_k,x) E_k(t) + \bar{\Phi}(t).
			\end{aligned}
			\label{W_derivative}
		\end{equation}
		Since the spectral matrices $E_k(t)$, $k=1,2,...,h(t)$ and the component $\bar{\Phi}(t)$ are linearly independent, the following equations can be obtained for $k=1,2,...,h(t)$,
		\begin{equation}
			\dot{\lambda}_k E_k(t) = \big( a_k(t) - \beta(\lambda_k,x) \big) E_k(t). \label{W_derivative_1}
		\end{equation}
		It is shown from (\ref{forgetting}) and (\ref{Phi_decomposition}) that when $\lambda_k = \sigma_{\max}$, there is $\beta(\lambda_k,x) = \lambda_{\max}\left(\phi(x)\phi(x)^T\right) \geq a_k(t)$.
		Then the following inequality can be obtained from (\ref{W_derivative_1}),
		\begin{equation}
			\dot{\lambda}_k = a_k(t) - \beta(\lambda_k,x) \leq 0. \label{W_derivative_2}
		\end{equation}
		Thus the richness of the $k$th excitation direction should be reduced when $\lambda_k = \sigma_{\max}$, and the eigenvalues $\lambda_k$, $k=1,2,...,h(t)$ remain upper-bounded by $\sigma_{\max}$.
		Finally, it can be concluded from (\ref{LRE_2}) and (\ref{transfor_9}) that all the elements in $Z(t)$ and $W(t)$ remain bounded.}
		\vspace{0.1cm}
		\\
		\noindent
		\textbf{(iii)}
		According to the definitions of \textcolor{blue}{the excitation} matrix in (\ref{W}) and the forgetting factor in (\ref{forgetting}), if a column vector belongs to $\mathcal{R}[W(t)]$ at a specific moment $t$, it will always belong to $\mathcal{R}[W]$ after that.
		\textcolor{blue}{Thus $\text{rank}[W(t)]$ and $\dim [\mathcal{R}[W(t)]]$ are non-decreasing.}
		In other words, the spectra of previously collected excitation information will never be removed, and $\mathcal{R}[W(\tau)] \subseteq \mathcal{R}[W(t)]$ holds for all $\tau \in [0,t]$. 
		By utilizing the spectral matrices in (\ref{LRE_2}), the regressor $\phi (x(\tau))$ is decomposed onto the $h(t)$  \textcolor{blue}{excitation} directions, as shown below,
		\begin{equation}
			\phi (x(\tau)) = \sum_{k=1}^{h(t)} \phi_k (x(\tau)),  \label{decom_phi_1}
		\end{equation}
		\textcolor{blue}{where $\phi_k (x(\tau)) = E_k(\tau) \phi (x(\tau))$ denotes the $k$th component of the regressor.
		It is clear that $\mathcal{R}[\phi_k(x(\tau))] \subseteq \mathcal{R}[E_k(\tau)]$}, and $\mathcal{R}[\phi_k(x(\tau))]$, $k=1,2,...,h(t)$ are orthogonal to each other.
		From (\ref{W}), the relationship between $\mathcal{R}[\phi_k(x(\tau))]$ and $\mathcal{R}[W(t)]$ are discussed in the following two cases.\\
		$\bullet$ If $\lambda_k \leq  \sigma_{\min}$, the forgetting factor $\beta(\lambda_k,x)$ equals zero.
		\textcolor{blue}{Noticing that} $\dot{W}(\tau)$ contains the excitation term $\phi_k(x(\tau)) \phi(x(\tau))^T$ \textcolor{blue}{and it is not attenuated, the excitation information along the $k$th excitation direction is collected and} there is $\mathcal{R}[\phi_k(x(\tau))] \subseteq \textcolor{blue}{\mathcal{R}[W(\tau)] \subseteq} \mathcal{R}[W(t)]$.
		\\
		$\bullet$ If $\lambda_k > \sigma_{\min}$, the excitation information of the $k$th \textcolor{blue}{excitation} direction \textcolor{blue}{should have been sufficiently collected} during the interval $[0,\tau]$, then we further obtain $\mathcal{R}[\phi_k(x(\tau))] \subseteq \mathcal{R}[E_k(\tau)] \subseteq \mathcal{R}[W(\tau)] \subseteq \mathcal{R}[W(t)]$.
		
		According to the decomposition of $\phi(x(\tau))$ in (\ref{decom_phi_1}), $\mathcal{R}[\phi(x(\tau))]$ equals the sum of  $\mathcal{R}[\phi_k(x(\tau))]$, $k=1,2,...,h(t)$.
		Then it can be concluded from the above analysis that \textcolor{blue}{for all $\tau \in [0,t]$,} $\mathcal{R}\left[\phi(x(\tau))\right]$ is a subset of $\mathcal{R}\left[W(t)\right]$.
		\vspace{0.1cm}
		\\
		\noindent
		\textbf{(iv)}
		Without loss of generality, \textcolor{blue}{let us denote} the $i$th column vector of $W(t)$ and the $j$th column vector of $\phi(x(\tau))$ as $w_i(t)$ and $\phi_j(x(\tau))$, respectively.
		From Theorem 1 (iii), for arbitrary $j=1,2,...,n$, $\phi_j(x(\tau))$ can be linearly expressed by $w_i(t)$, $i=1,2,...,p$, as shown below,
		\begin{equation}
			\phi_j(x(\tau)) = \sum_{i=1}^{p} a_{ij}(t,\tau) w_i(t), \label{proof3_1}
		\end{equation}
		where $a_{ij}(t,\tau)$, $i=1,2,...,p$ is a group of weight coefficients.
		Moreover, we define a matrix $A(t,\tau) \in \mathbb{R}^{p \times n}$ as follows,
		\begin{equation}
			A(t,\tau)=
			\begin{bmatrix}
				a_{11} & a_{12} & \dots  & a_{1n} \\
				a_{21} & a_{22} & \dots  & a_{2n} \\
				\vdots & \vdots & \ddots & \vdots \\
				a_{p1} & a_{p2} & \dots  & a_{pn}
			\end{bmatrix},
		\end{equation}
		then the following equation holds,
		\begin{equation}
			\phi(x(\tau))=W(t) A(t,\tau). \label{proof3_2}
		\end{equation}
		Since $W(t)$ is a symmetric matrix, we have
		\begin{equation}
			\phi(x(\tau))^T \varphi = A(t,\tau)^T W(t)^T \varphi = A(t,\tau)^T W(t) \varphi. \label{proof3_3}
		\end{equation}
		It is clear that we can obtain $\phi(x(\tau))^T \varphi = 0$ from $W(t) \varphi=0$.
		Since $w_i(t)$ can be linearly expressed by all column vectors of $\phi(x(\tau))$ with $\tau \in [0,t]$, $W(t) \varphi=0$ can also be deduced from $\phi(x(\tau))^T \varphi = 0$.
		It can be concluded that $W(t) \varphi =0$ is equivalent to $\phi(x(\tau))^T \varphi =0$ for all $\tau \in [0,t]$.
	\end{proof}
	
	\begin{remark}
		Although some related results \cite{cho2018composite,pan2019efficient,lee2019concurrent,goel2020recursive,lai2024sift} have \textcolor{blue}{utilized forgetting factors to} guarantee the boundedness of $Z(t)$ and $W(t)$, \textcolor{blue}{the proposed scheme has unique advantages.
		Specifically, \cite{cho2018composite,pan2019efficient} directly add forgetting factors to both sides of the LRE to attenuate all the excitation directions, i.e., $\dot{Z}=-\beta Z+\varpi(\dot{x},x,u)$ and $\dot{W}=-\beta W + \phi(x)\phi(x)^T$, without considering the differences in excitation richness of the \textcolor{blue}{excitation} directions.
		Although \cite{lee2019concurrent,goel2020recursive,lai2024sift} present directional forgetting methods to attenuate the previously collected excitation information solely along the present excitation directions, it is still challenging to handle the weak excitation situations.}
		When the previously collected excitation information of the strong \textcolor{blue}{excitation} directions are forgotten, the weak \textcolor{blue}{excitation} directions may be unexpectedly attenuated at the same time, which may be harmful to the control and estimation performance.
		Compared with \textcolor{blue}{these methods, we utilize the spectral decomposition and the orthogonal projection techniques to decompose the previously collected excitation information based on excitation richness.
		By designing independent} forgetting factors $\beta(\lambda_k,x)$ for the independent excitation directions, \textcolor{blue}{the strong excitation directions are attenuated to guarantee the boundedness of $Z(t)$ and $W(t)$, while the excitation information along the weak excitation directions can be sufficiently collected.}
	\end{remark}
	
	\begin{remark}
		\textcolor{blue}{
		The main novelty of the proposed excitation collection scheme is ``decomposing the previously collected excitation information according to the excitation richness'', rather than ``designing the forgetting factor functions''.
		It is important to note that the main results of Theorem 1 can be obtained by utilizing many kinds of excitation attenuation schemes. 
		For example, a periodic excitation attenuation scheme is introduced as follows.}
		
		\textcolor{blue}{
		We denote $\Delta t \in \mathbb{R}^+$ as the attenuation period, then the sequence of attenuation moments are represented by $\mathcal{T}_i$, which satisfies $\mathcal{T}_0=0$ and $\mathcal{T}_i=\mathcal{T}_{i-1} + \Delta t$ for all $i \in \mathbb{Z}^+$. 
		Then the vector $Z(t)$ and the excitation matrix $W(t)$ are obtained by utilizing the following rules.} \\
		\textcolor{blue}{\textbf{1)} The initial values are set as $Z(0)=O^{p \times 1}$, $W(0)=O^{p \times p}$.} \\
		\textcolor{blue}{\textbf{2)} During the intervals between any two successive excitation attenuation moments, $Z(t)$ and $W(t)$ are obtained by integrating the transformed system model (\ref{transfor_1}), as shown below,
		\begin{equation}
			\begin{aligned}
				\dot{Z} &= \varpi(\dot{x},x,u), \text{ } \forall t \in \left( \mathcal{T}_{i-1}, \mathcal{T}_i \right), \\
				\dot{W} &= \phi(x) \phi(x)^T, \text{ } \forall t \in \left( \mathcal{T}_{i-1}, \mathcal{T}_i \right).
			\end{aligned}
			\label{LRE-descrete-1}
		\end{equation}
		\textbf{3)} At the excitation attenuation moments $\mathcal{T}_i$, $Z(\mathcal{T}_i^+)$ and $W(\mathcal{T}_i^+)$ are updated according to the decomposed LRE (\ref{direction}),
		\begin{equation}
			\begin{aligned}
				Z(\mathcal{T}_i^+) &= \sum_{k=1}^{h(\mathcal{T}_i^-)} \frac{\min \big( \lambda_k(\mathcal{T}_i^-),\eta \big)}{\lambda_k(\mathcal{T}_i^-)} E_k(\mathcal{T}_i^-) Z(\mathcal{T}_i^-),
				\\
				W(\mathcal{T}_i^+) &= \sum_{k=1}^{h(\mathcal{T}_i^-)} \min \big( \lambda_k(\mathcal{T}_i^-),\eta \big) E_k(\mathcal{T}_i^-),
			\end{aligned}
			\label{LRE-descrete-2}
		\end{equation}
		where $\eta \in \mathbb{R}^+$ is a design parameter.
		It is clear that the excitation richness of the strong excitation directions ($\lambda_k > \eta$) is reduced to $\eta$ at the attenuation moments, while the relatively weak excitation directions ($\lambda_k \leq \eta$) remain unchanged.}
	\end{remark}

	\subsection{Decomposition of Parameter Estimation Error}
	\textcolor{blue}{It can be seen from Theorem 1 (iii) that} all the spectra of historical excitation information during the interval $[0,t]$ are \textcolor{blue}{collected} into the \textcolor{blue}{excitation} matrix $W(t)$.
	The increase in ${\rm rank}[W(t)]$ suggests that some new excitation information has just been collected. 
	The regressor satisfies the \ref{SE} and \ref{IE} conditions \textcolor{blue}{if and only if ${\rm rank}[W(t)]=p$.}
	However, this paper focuses on \textcolor{blue}{adaptive control of uncertain nonlinear systems under non-persistent partial excitation, which indicates that the excitation information remains non-persistent and insufficient}, i.e., ${\rm rank}[W(t)]<p$, $\forall t \geq 0$.
	\textcolor{blue}{It is clear that for an adaptive control system, one of the SE / IE conditions and the non-persistent partial excitation conditions are satisfied.}
	
	\textcolor{blue}{Before presenting the decomposition method for the parameter estimation error, we first elaborate on its motivation.
	The main objective of adaptive control differs significantly from that of system identification. 
	System identification aims for estimation accuracy, while adaptive control aims to ensure satisfactory control performance. 
	That's to say, if a component of parameter estimation error has not affected the control performance, it can be tolerated in adaptive control design.}
	
	\textcolor{blue}{Noticing that the CE controllers generally utilize the estimation $\phi(x)^T \hat{\theta}$ to compensate for the parametric uncertainties $\phi(x)^T \theta$, the entire estimation error $\phi(x)^T \tilde{\theta}$ equals zero when $\tilde{\theta}$ belongs to the null space $\mathcal{N}\left[\phi(x)^T\right]$.
	It has been demonstrated form Theorem 1 (iv) that the excitation} matrix $W(t)$ shares the same null space with the regressor $\phi(x(\tau))$ for all $\tau \in [0,t]$, indicating that the projection of $\tilde{\theta}$ on $\mathcal{N}[W(t)]$ has not affected the control performance.
	Therefore, the null space $\mathcal{N}[W(t)]$ \textcolor{blue}{is defined} as the unexcited subspace for $\tilde{\theta}$.
	From Lemma 2 and \textcolor{blue}{Theorem 1 (i)}, the range space $\mathcal{R}[W(t)]$ and the null space $\mathcal{N}[W(t)]$ are orthogonal complement, thus $\mathcal{R}[W(t)]$ \textcolor{blue}{is defined} as the excited subspace for $\tilde{\theta}$.
	\textcolor{blue}{For a specific moment $t$, the parameter estimation error $\tilde{\theta}(t)$ is decomposed into the excited component $\tilde{\theta}_e(t)$ and the unexcited component $\tilde{\theta}_u(t)$, which are the orthogonal projections of $\tilde{\theta}(t)$ on the excited subspace $\mathcal{R}[W(t)]$ and the unexcited subspace $\mathcal{N}[W(t)]$ respectively, as shown below,
	\begin{align}
		\tilde{\theta}_e(t)
		&= {\rm Proj} \big( \tilde{\theta}(t),\mathcal{R}[W(t)] \big),  \label{excited} \\
		\tilde{\theta}_u(t)
		&= {\rm Proj} \big( \tilde{\theta}(t),\mathcal{N}[W(t)] \big).  \label{unexcited}
	\end{align}
	For more convenient formulation of the control design and analysis, the excited and unexcited components of the unknown vector and the parameter estimate are defined below,
	\begin{align}
		\theta_e(t) &= {\rm Proj} \big( \theta,\mathcal{R}[W(t)] \big),  \label{excited-theta} \\
		\theta_u(t) &= {\rm Proj} \big( \theta,\mathcal{N}[W(t)] \big),  \label{unexcited-theta} \\
		\hat{\theta}_e(t) &= {\rm Proj} \big( \hat{\theta}(t),\mathcal{R}[W(t)] \big),  \label{excited-hatheta} \\
		\hat{\theta}_u(t) &= {\rm Proj} \big( \hat{\theta}(t),\mathcal{N}[W(t)] \big).  \label{unexcited-hatheta}
	\end{align}
	Note that when the LRE (\ref{transfor_9}) collects some new excitation information, the excited and the unexcited components in (\ref{excited})-(\ref{unexcited-hatheta}) may experience abrupt changes as $\mathcal{R}[W(t)]$ and $\mathcal{N}[W(t)]$ vary suddenly.
	It is evident that the components $\tilde{\theta}_e$, $\tilde{\theta}_u$, $\theta_e$, $\theta_u$ are not accessible in practical applications.
	Consequently, they will be employed solely in stability analysis rather than in adaptive control design.}
	
	\begin{remark}
		Compared with the existing results on concurrent learning \cite{chowdhary2010concurrent,chowdhary2013concurrent,chowdhary2014exponential,li2022concurrent,long2023filtering,parikh2019integral} and composite learning \cite{pan2016composite,pan2018composite,roy2018combined,pan2017composite,guo2019composite,guo2020composite,cho2018composite,pan2019efficient,pan2022bioinspired}, the proposal of excited and unexcited subspaces is a significant improvement of this paper, with which the excited and the unexcited components of the parameter estimation error are promising to be discriminated in adaptive control design and stability analysis.
		\textcolor{blue}{It is important to} note that the \textcolor{blue}{decomposition method of parameter estimation error is suitable for not only} the proposed LRE (\ref{transfor_9}), \textcolor{blue}{but also} all the LREs that could collect all the spectra of \textcolor{blue}{historical} excitation information.
	\end{remark}

	\subsection{Adaptive Control Design and Stability Analysis}
	\textcolor{blue}{Based on the composite learning and $\mu$-modification techniques, we design an adaptive control law as shown below,
	\begin{align}
		u &= -k_e e -f(x) - \phi(x)^T \hat{\theta} + \dot{x}_r,  \label{CE}  \\
		\dot{\hat{\theta}} &= \gamma \phi(x) e + \gamma k_{\theta} \left(Z(t) - W(t) \hat{\theta}\right) - \mu W ^{\bot}(t) \hat{\theta}, \label{adaptive-law}
	\end{align}
	where $k_e$, $k_{\theta}$, $\gamma$, $\mu \in \mathbb{R}^+$ are design parameters.
	The matrix $W ^{\bot}(t)$ is defined as presented below
	\begin{equation}
		W ^{\bot}(t) = \begin{cases}
			E_1(t), &\text{if rank}[W(t)] < p, \\
			O^{p \times p}, &\text{if rank}[W(t)] = p.
		\end{cases}
		\label{proj-unexcited}
	\end{equation}
	Note that when $W(t)$ does not have full rank, $W ^{\bot}(t)$ is the unit projection matrix for the unexcited subspace $\mathcal{N}[W(t)]$.}
	
	\textcolor{blue}{In the parameter update law (\ref{adaptive-law}), the first (Lyapunov-based) term compensates for the control performance along the direction of $\phi(x)$, the second (composite learning) term accelerates the convergence of parameter estimation by utilizing the collected excitation information, the third ($\mu$-modification) term enhances the robustness of the closed-loop systems.
	From (\ref{model}), (\ref{control-error}), (\ref{estimation-error}), (\ref{CE}), (\ref{adaptive-law}), the dynamics of the resulting closed-loop system is obtained as follows,
	\begin{align}
		\dot{e} &= -k_e e + \phi(x)^T \tilde{\theta}, \label{closed-e} \\
		\dot{\tilde{\theta}} &= -\gamma \phi(x) e - \gamma k_{\theta} W(t) \tilde{\theta} + \mu W ^{\bot}(t) \hat{\theta}. \label{closed-theta}
	\end{align}
	It can be observed from (\ref{closed-theta}) that the composite learning term is} negative semidefinite with respect to $\tilde{\theta}$, which facilitates the enhancement of estimation and control performance. 
	\textcolor{blue}{Similar to the related results \cite{pan2016composite,pan2018composite,roy2018combined,pan2017composite,guo2019composite,guo2020composite,cho2018composite,pan2019efficient,pan2022bioinspired}, when the \ref{SE} and \ref{IE} conditions are satisfied, it can be readily shown, based on the positive definiteness of the excitation matrix $W(t)$, that the closed-loop system is exponentially stable. 
	However, this paper focuses on adaptive control of uncertain nonlinear systems under non-persistent partial excitation, in the sense that $W(t)$ is always positive semidefinite.}
	
	\textcolor{blue}{For a specific interval $[0,t]$, it can be shown from Theorem 1 (iii) that $\phi(x(\tau)) \in \mathcal{R}[W(t)]$ for all $\tau \in [0,t]$, which indicates that $-\gamma \phi(x(\tau)) e - \gamma k_{\theta} W(\tau) \tilde{\theta} \in \mathcal{R}[W(t)]$ and $\mu W ^{\bot}(\tau) \hat{\theta} \in \mathcal{N}[W(t)]$ always hold.
	From the definitions of $\tilde{\theta}_e$ in (\ref{excited}), $\hat{\theta}_u$ in (\ref{unexcited-hatheta}), and $W ^{\bot}(t)$ in (\ref{proj-unexcited}), we have $W(t) \tilde{\theta}(t) = W(t) \tilde{\theta}_e(t)$ and $W ^{\bot}(t) \hat{\theta}(t) = \hat{\theta}_u(t)$.
	Based on the above analysis, the dynamics of the parameter estimation error in (\ref{closed-theta}) can be further decoupled into the excited and the unexcited subspaces, as shown below,}
	\begin{align}
		\textcolor{blue}{\dot{\tilde{\theta}}_e} 
		& \textcolor{blue}{= -\gamma \phi(x) e - \gamma k_{\theta} W(t) \tilde{\theta}_e,} \label{closed-theta-e} \\
		\textcolor{blue}{\dot{\tilde{\theta}}_u}
		& \textcolor{blue}{= \mu \hat{\theta}_u.} \label{closed-theta-u}
	\end{align}
	
	\begin{theorem}
		Considering the uncertain system (\ref{model}) \textcolor{blue}{and a reference signal $x_r(t)$ under non-persistent partial excitation and Assumption 1}, by employing the composite learning adaptive control law (\ref{CE}), (\ref{adaptive-law}), the control error and the excited parameter estimation error component converge exponentially to zero, while the unexcited parameter estimation error component remains bounded.
	\end{theorem}

	\begin{proof}
		From Theorem 1 (iii) and the definition of $W(t)$ in (\ref{W}), ${\rm rank}[W(t)]$ increases monotonically in a discrete-time manner.
		From the monotone bounded theorem \cite{hass2024thomas} and the fact of ${\rm rank}[W(t)] < p$, ${\rm rank}[W(t)]$ must tend to its supremum as $t \rightarrow \infty$.
		Without out loss of generality, \textcolor{blue}{we denote} $\delta_{\kappa} \in \mathbb{Z}^+$ as the supremum of ${\rm rank}[W(t)]$, and let $t_{\kappa}$ be the first moment satisfying ${\rm rank}[W(t)]=\delta_{\kappa}$.
		\textcolor{blue}{Then the increasing moments of ${\rm rank}[W(t)]$ are defined as follows,}
		\begin{equation}
			t_i=\min \left\{ t>t_{i-1}: {\rm rank} [W(t)] > {\rm rank} [W(t_{i-1})] \right\},
		\end{equation}
		where $i=1,2,...,\kappa$ represent the serial numbers of these moments.
		We impose $t_0=0$ and $t_{\kappa+1}=\infty$ for the completeness of the definition.
		The proof \textcolor{blue}{will be completed} in three steps, which correspond to,
		(i) $e$ and $\tilde{\theta}_1$ during the intervals \textcolor{blue}{$\left(t_i,t_{i+1}\right)$},
		(ii) $e$ and $\tilde{\theta}_1$ during the entire time frame $[0,\infty)$, 
		\textcolor{blue}{and (iii) $\tilde{\theta}_2$ during the entire time frame $[0,\infty)$.}
		\vspace{0.1cm}
		\\
		\noindent
		\textbf{(i)}
		\textcolor{blue}{It is clear that $\mathcal{R}[W(t)]$ and $\mathcal{N}[W(t)]$ remain invariant during the intervals $\left(t_i,t_{i+1}\right)$, thereby the excited and the unexcited components in (\ref{excited})-(\ref{unexcited-hatheta}) vary continuously.}
		We construct a novel Lyapunov function that incorporates the control error $e$ and the excited component $\tilde{\theta}_{e}$, rather than the entire parameter estimation error $\tilde{\theta}$,
		\begin{equation}
			V_e (x,\hat{\theta}_e) = \frac{1}{2}e^Te +\frac{1}{2 \gamma} \tilde{\theta}_e^T \tilde{\theta}_e. \label{Vi}
		\end{equation}
		\textcolor{blue}{It can be shown from Theorem 1 (iv) and the definitions of $\tilde{\theta}_e$, $\tilde{\theta}_u$ in (\ref{excited}), (\ref{unexcited}) that $\phi(x(t))^T \tilde{\theta}(t) = \phi(x(t))^T \tilde{\theta}_e(t)$.}
		Then, from (\ref{closed-e}) and (\ref{closed-theta-e}), the derivative of the Lyapunov function (\ref{Vi}) is calculated as follows,
		\begin{align}
			\dot{V}_e (x,\hat{\theta}_e)
			=& -k_e e^Te + e^T\phi(x)^T \tilde{\theta} - \tilde{\theta}_e^T \phi(x) e - k_{\theta} \tilde{\theta}_e^T W(t) \tilde{\theta}_e
			  \nonumber \\
			=& -k_e e^Te - k_{\theta} \tilde{\theta}_e^T W(t) \tilde{\theta}_e.
			\label{dotV_e}
		\end{align}
			
		From \textcolor{blue}{Theorem 1 (i) and the definition of $\tilde{\theta}_e$ in (\ref{excited})}, $W(t)$ is positive semidefinite and $\tilde{\theta}_{e}(t)$ does not contain the eigenvector of $\lambda_1=0$, i.e., ${\rm Proj} \big( \tilde{\theta}_{e}(t), \mathcal{E}(\lambda_1) \big)=0$.
		According to Lemma 3, we obtain the following inequality,
		\begin{equation}
			\tilde{\theta}_e^T W(t) \tilde{\theta}_e \geq \lambda^+_{\min} (W(t)) \tilde{\theta}_e^T \tilde{\theta}_e,  \label{lambda}
		\end{equation}
		where $\lambda^+_{\min} (W(t))$ denotes the smallest positive eigenvalue of $W(t)$.
		Substituting (\ref{lambda}) into (\ref{dotV_e}), we have
		\begin{equation}
				\dot{V}_e (x,\hat{\theta}_e) \leq  -c(t) V_e (x,\hat{\theta}_e), \text{ } \forall t \in \textcolor{blue}{\left(t_i,t_{i+1}\right)},
			\label{dotV_i_2}
		\end{equation}
		where $c(t)= \min\left( 2 k_e, 2 \gamma k_{\theta} \lambda^+_{\min} (W(t)) \right)$ is the convergence rate.
		Then it can be concluded that the Lyapunov function (\ref{Vi}) converges exponentially to zero during the intervals \textcolor{blue}{$\left(t_i,t_{i+1}\right)$}.
		\\
		\noindent
		\textbf{(ii)}
		Note that the smallest positive eigenvalue $\lambda^+_{\min} (W(t))$ and the convergence rate $c(t)$ jump near zero just after the moments $t_i$, as shown in Fig. \ref{remark-fig}.
		\textcolor{blue}{To render} the proposed composite learning adaptive control scheme more \textcolor{blue}{persuasive}, the exponential stability will be further demonstrated for the entire time frame $[0,\infty)$ with a constant convergence rate. 
		\begin{figure}[!htbp]
			\vspace{-0.2cm}
			\centering
			\includegraphics[width=7cm]{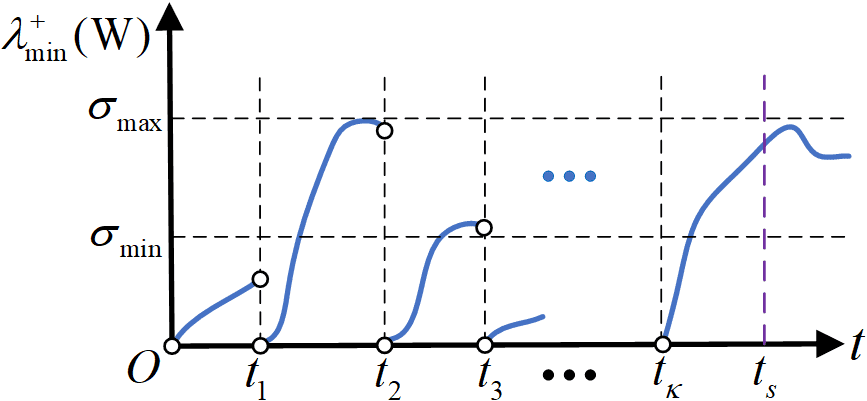}
			\vspace{-0.2cm}
			\caption{\textcolor{blue}{Illustrating figure for $\lambda^+_{\min} (W(t))$ and $t_s$.}}
			\label{remark-fig}
			\vspace{-0.2cm}
		\end{figure}
		
		\textcolor{blue}{The excited components of $\hat{\theta}(t)$ and $\tilde{\theta}(t)$ with respect to the excitation matrix at moment $t_{\kappa}$ are defined below,
		\begin{align}
			\hat{\theta}_{e,\kappa}(t) = {\rm Proj} \big( \hat{\theta}(t), \mathcal{R}[W(t_{\kappa}^+)] \big),
			\label{excited-entire-1} \\
			\tilde{\theta}_{e,\kappa}(t) = {\rm Proj} \big( \tilde{\theta}(t), \mathcal{R}[W(t_{\kappa}^+)] \big).
			\label{excited-entire-2}
		\end{align}
		To analyze the convergence property of $e$ and $\tilde{\theta}_{e,\kappa}(t)$ during the entire time frame, we define the following constant vectors,
		\begin{align}
			\varrho_i &= {\rm Proj} \big( \tilde{\theta}(0), \mathcal{R}[W(t_i^+)] \big), 
			\text{ } i=0,1,...,\kappa,  \label{vartheta_i} \\
			\varsigma_i &= {\rm Proj} \big( \tilde{\theta}(0), \mathcal{R}[W(t_{\kappa}^+)] \big) - \varrho_i.  \label{varsigma}
			\text{ } i=0,1,...,\kappa,   \\
			\vartheta_i &= \varrho_i - \varrho_{i-1} = \varsigma_{i-1}-\varsigma_i, \text{ } i=1,2,...,\kappa.  \label{varrho_i} 
		\end{align}
		Here, $\varrho_i$ denotes the component of $\tilde{\theta}_{e,\kappa}(0)$ that is excited during $[0,t_i]$, $\varsigma_i$ refers to the component that was not excited during $[0,t_i]$ but will be excited during $(t_i,t_{\kappa}]$, and $\vartheta_i$ represents the newly excited component at moment $t_i$.
		It is clear that $\tilde{\theta}_{e,\kappa}(0)$ and $\vartheta_i$ satisfy the following properties,
		\begin{align}
			\tilde{\theta}_{e,\kappa}(0) &= \sum_{i=1}^{\kappa} \vartheta_i, \label{decompose} \\
			\vartheta_{i_1}^T \vartheta_{i_2} &=0, \text{ } \forall i_1,i_2 = 1,2,...,\kappa \text{ and } i_1 \neq i_2,  \label{orthogonal} \\
			\left\Vert \varsigma_{i-1} \right\Vert &\geq \left\Vert \varsigma_i \right\Vert, \text{ } \forall i=1,2,...,\kappa.
			\label{inequality}
		\end{align}
		According to (\ref{decompose}) and Theorem 2 (i), the vector $[e(t),\tilde{\theta}_e(t)]$ converges to $[0,\varsigma_i]$ during the interval $(t_i,t_{i+1})$.
		It can be seen from (\ref{inequality}) that the vector $[e(t),\tilde{\theta}_{e,\kappa}(t)]$ converges monotonically to zero during the entire time frame.}
		\textcolor{blue}{Then we define a Lyapunov function for $e$ and $\tilde{\theta}_{e,\kappa}$, as shown below,
		\begin{equation}
			V_{e,\kappa}(x,\hat{\theta}_{e,\kappa}) = \frac{1}{2}e^Te +\frac{1}{2 \gamma} \tilde{\theta}_{e,\kappa}^T \tilde{\theta}_{e,\kappa}. \label{Ve}
		\end{equation}
		In what follows, $V_{e,\kappa}(x,\hat{\theta}_{e,\kappa})$ will be abbreviated as $V_{e,\kappa}(t)$.
		Similar to (\ref{dotV_i_2}), the derivative of (\ref{Ve}) satisfies $\dot{V}_{e,\kappa} (x,\hat{\theta}_{e,\kappa}) \leq  -c(t) V_{e,\kappa} (x,\hat{\theta}_{e,\kappa})$ during the interval $(t_{\kappa},\infty)$.}
		Without loss of generality, we arbitrarily choose a moment \textcolor{blue}{$t_s \in (t_{\kappa},\infty)$. 
		It can be observed from Fig. \ref{remark-fig} and the forgetting factor (\ref{forgetting}) that $\lambda_{\min}^+(W(t)) \geq \min \big(\sigma_{\min},\lambda_{\min}^+(W(t_s)) \big) >0$ holds for all $t \geq t_s$, and there should be $c(t) \geq \underline{c} \triangleq \min\left( 2 k_e, 2 \gamma k_{\theta} \sigma_{\min}, 2 \gamma k_{\theta} \lambda^+_{\min} (W(t_s)) \right)>0$.}
		\textcolor{blue}{For arbitrary $t > t_s$}, integrating (\ref{dotV_i_2}) \textcolor{blue}{over the interval $(t_s,t)$}, there is
		\begin{equation}
			V_{e,\kappa}(t) \leq V_{e,\kappa}(t_s)e^{-\underline{c}(t-t_s)}, \text{ } \forall t \geq t_s.  \label{bound_1}
		\end{equation}
		
		Since $V_{e,\kappa}(t)$ is decreasing during the interval $[0,t_s]$, we can conservatively obtain the following inequality,
		\begin{equation}
			\begin{aligned}
				V_{e,\kappa}(t) 
				&\leq V_{e,\kappa}(0)e^{-\underline{c}(t-t_s)}  \\
				&= V_{e,\kappa}(0)e^{\underline{c}t_s} e^{-\underline{c}t} \\
				&= \chi V_{e,\kappa}(0) e^{-\underline{c} t}, \text{ } \forall t \geq 0,
			\end{aligned}
			\label{bound_2}
		\end{equation}
		where $\chi=e^{\underline{c}t_s}$ is a constant corresponding to the \textcolor{blue}{excitation richness and the design parameters}.
		Fig. \ref{proof-fig} is given to show the relationship between the Lyapunov function $V_{e,\kappa}(t)$ and the exponential convergence bounds in (\ref{bound_1}), (\ref{bound_2}).
		\begin{figure}[!htbp]
			\vspace{-0.2cm}
			\centering
			\includegraphics[width=7cm]{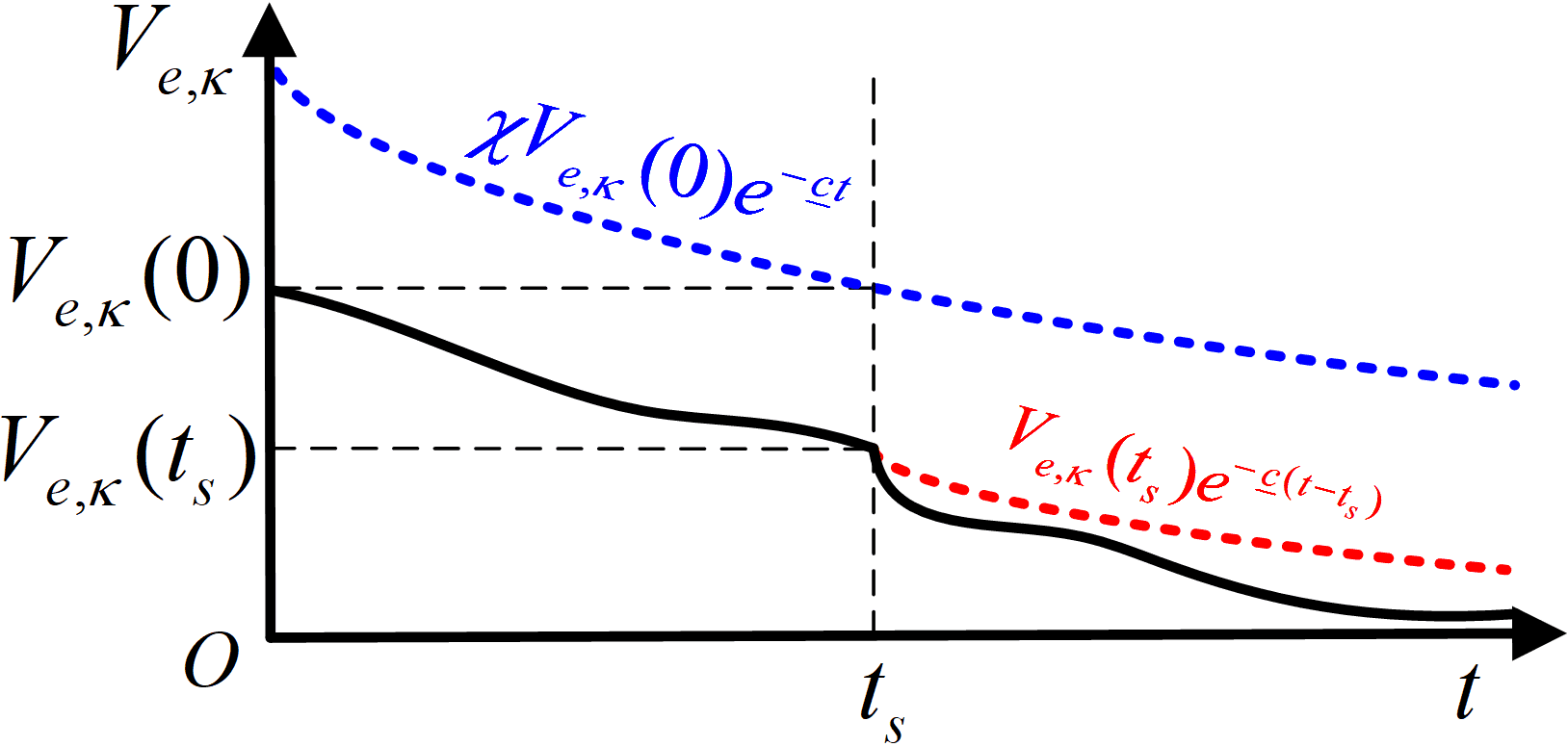}
			\vspace{-0.2cm}
			\caption{\textcolor{blue}{Illustrating figure for $V_{e,\kappa}(t)$ and the convergence bounds.}}
			\label{proof-fig}
			\vspace{-0.2cm}
		\end{figure}
		
		From (\ref{bound_2}), it can be concluded that the vector $[e(t),\tilde{\theta}_{e,\kappa}(t)]$ converge exponentially to zero during the entire time frame $[0,\infty)$, with a constant convergence rate $\underline{c}$.
		\vspace{0.1cm}
		\\
		\noindent
		\textbf{(iii)}
		\textcolor{blue}{Similar to the existing methods \cite{ioannou1983adaptive,narendra1987a,uzeda2023adaptive,uzeda2023robust,uzeda2025estimation}, we design a Lyapunov function for $\tilde{\theta}_u$, as shown below,
		\begin{equation}
			V_u (\hat{\theta}_u) = \frac{1}{2\mu} \tilde{\theta}_u^T \tilde{\theta}_u. \label{V_u}
		\end{equation}
		From (\ref{closed-theta-u}) and the Young's inequality, the derivative of (\ref{V_u}) is calculated for the intervals $(t_i,t_{i+1})$, as shown below,
		\begin{equation}
			\begin{aligned}
				\dot{V}_u (\hat{\theta}_u) 
				&= \tilde{\theta}_u^T \big( \theta_u - \tilde{\theta}_u \big) \\
				&\leq  \left( \frac{1}{2} \tilde{\theta}_u^T \tilde{\theta}_u + \frac{1}{2} \theta_u^T \theta_u \right) - \tilde{\theta}_u^T \tilde{\theta}_u \\
				&= -\mu V_u (\hat{\theta}_u) + \frac{1}{2} \theta_u^T \theta_u.
			\end{aligned}
			\label{dot_V_u}
		\end{equation}
		Noticing that $\frac{1}{2} \theta_u^T \theta_u$ is a positive constant, it can be concluded from (\ref{V_u}) and (\ref{dot_V_u}) that the unexcited parameter estimation error component $\tilde{\theta}_u$ remains bounded.}
	\end{proof}
	
	\begin{remark}
		Although some concurrent learning \cite{chowdhary2010concurrent,chowdhary2013concurrent,chowdhary2014exponential,li2022concurrent,long2023filtering,parikh2019integral} and composite learning \cite{pan2016composite,pan2018composite,roy2018combined,pan2017composite,guo2019composite,guo2020composite,cho2018composite,pan2019efficient,pan2022bioinspired} schemes have been proposed \textcolor{blue}{to enhance the estimation performance for adaptive control systems, they still require \ref{SE} and \ref{IE} conditions to ensure} exponential stability for system (\ref{model}).
		\textcolor{blue}{It is clear that these conditions are easier to be satisfied than \ref{PE} conditions, and can be verified online by checking $\text{rank}[W(t)]$.
		However, noticing that the \ref{SE} and \ref{IE} conditions rely on the future system states, i.e., $x(t)$ for $t \in [0,t_e]$, they are still challenging to ensure and verify in advance.
		In fact, the \ref{PE} conditions can be regarded as the constraints on both space and time dimensions, in the sense that the spectra of excitation information are sufficiently rich during the entire time frame.
		The existing results of concurrent learning and composite learning utilize the historical excitation information to relax the time constraint, thus the \ref{SE} and \ref{IE} conditions require the spectra of excitation information to be sufficiently rich just for an interval $[0,t_e]$.}
		
		\textcolor{blue}{Noticing that the main objective of adaptive control is to ensure control performance rather than estimation accuracy, the unexcited parameter estimation error component can generally be tolerated. 
		Based on the partial identification thoughts \cite{bittanti1990recursive,bittanti1992effective,kogan1996locally,li1998geometric}, this paper presents a novel composite learning adaptive control scheme to relax the excitation conditions.
		It is shown from (\ref{closed-e})-(\ref{closed-theta-u}) that under non-persistent partial excitation, the closed-loop system can be decoupled into the excited and the unexcited subsystems.
		The excited parameter estimation error component, as well as the effects of parametric uncertainties can be eliminated, while the unexcited component never affects the control performance.
		It is crucial to note that the proposed adaptive control scheme relaxes both space and time constraints, which indicates that the spectra of excitation information never need to be sufficiently rich.}
		Notably, this paper can be regarded as an extension of the existing results of composite learning. 
		The proposed adaptive controller \textcolor{blue}{provides} similar performance with the existing results under the \ref{SE} and \ref{IE} conditions, while \textcolor{blue}{it is also applicable to the non-persistent partial excitation situation.
		Noticing that for arbitrary moment $t$, one of the SE / IE condition ($\text{rank}[W(t)]=p$) and the non-persistent partial excitation condition ($\text{rank}[W(t)]<p$) is satisfied, the proposed adaptive control scheme can be implemented without imposing an excitation assumption.}
	\end{remark}
	
	\begin{remark}
		\textcolor{blue}{The existing findings of partial identification approaches \cite{bittanti1990recursive,bittanti1992effective,kogan1996locally,li1998geometric} generally decompose the parameter estimation error into the persistently excited and non-persistently excited components, and they eliminate the persistently excited component.
		Compared with these studies, the proposed adaptive control scheme sufficiently utilizes the historical excitation information, and eliminates the excited component.
		It is evident that the persistently excited component is always a part of the excited component.
		Consequently, the proposed adaptive control scheme demonstrates superior estimation and control performance. }

		\textcolor{blue}{In the existing results of $\mu$-modification technique  \cite{uzeda2023adaptive,uzeda2023robust,uzeda2025estimation}, the principal component analysis method is employed to estimate the persistently excited subspace under certain excitation assumptions.
		In comparison, this paper utilizes the excitation matrix to define the excited and unexcited subspaces, thereby relaxing the additional excitation assumptions and reducing the computational cost.}
	\end{remark}
	
	\begin{remark}
		\textcolor{blue}{Noticing that the LRE is orthogonal decomposed based on excitation richness in (\ref{direction}), it is promising to achieve some additional design objectives by slightly modifying the adaptive controller.
		For example, we can adjust the convergence rate for the different excitation directions by reconstructing a new LRE $Z_s(t)=W_s(t) \theta$ for the parameter update law (\ref{adaptive-law}),
		\begin{equation}
			\begin{aligned}
				Z_s(t) &= \sum_{k=1}^{h(t)} s_k(t) E_{k}(t) Z(t),  \\
				W_s(t) &= \sum_{k=1}^{h(t)} s_k(t) \lambda_{k}(t) E_{k}(t),
			\end{aligned}
			\label{reconstruction}
		\end{equation}
		where $s_k(t)$, $k=1,2,...,h(t)$ are design parameters regarding the convergence rate of the excitation directions.}
		
		\textcolor{blue}{On the other hand, due to the inevitable noises, the collected excitation information may become impure, then the relatively weak excitation directions may be overpowered and rendered useless for parameter estimation. 
		We can set the design parameters $s_k(t)$ of these excitation directions to be zero, and further avoid their unexpected effects.}
	\end{remark}
	
	\begin{remark}
		\textcolor{blue}{Let us concentrate on the implementation feasibility of the proposed adaptive control scheme.
		It is evident that the use of spectral decomposition may incur additional computational cost. 
		Nevertheless, it is essential to emphasize that the computational complexity of spectral decomposition is equivalent to that of matrix multiplication, namely $O(N_3)$, and can be resolved in real time by advanced microcomputers.}
		
		\textcolor{blue}{As revealed in the related results \cite{parikh2019integral}, the integrators in (\ref{Z}) and (\ref{W}) can effectively reduce the effects of other stochastic uncertainties for parameter estimation.
		Specifically, the effects of these stochastic uncertainties at different moments might cancel each other out during integration.}
		
		\textcolor{blue}{It can be observed from $\underline{c}$ that the convergence rate is related to the excitation richness and the design parameters $k_e$, $k_{\theta}$, $\gamma$, $\sigma_{\min}$.
		Here, appropriately increasing $k_e$, $k_{\theta}$, $\gamma$ can accelerate the convergence speed.
		Additionally, $\sigma_{\min}$ should be greater than the excitation richness required for accurate estimation, $\sigma_{\max}$ should not be overly large to prevent the issues of stiff differential equations and numerical divergence.}
	\end{remark}

	\section{Composite Learning Adaptive Dynamic Surface Control With RBFNN}
	By incorporating the \textcolor{blue}{radial basis function neural network (RBFNN) \cite{ge2002stable}} and dynamic surface control \cite{swaroop2002dynamic} techniques, the proposed composite learning adaptive control scheme is applied to the high-order systems with \textcolor{blue}{unstructured} uncertainties, which can be modeled as follows,
	\begin{align}
			\dot{x}_j &= \textcolor{blue}{x_{j+1} + f_j(\bar{x}_j) + \psi_j(\bar{x}_j) + d_j(t),} \text{ } \textcolor{blue}{j=1,2,...,n-1,} \nonumber \\
			\dot{x}_n &= \textcolor{blue}{u + f_n(x) + \psi_n(x) + d_n(t),} \label{high_order}
	\end{align}
	where $x=\left[x_1,x_2,...,x_n\right]^T \in \mathbb{R}^n$ and $u \in \mathbb{R}$ are the system states and control input, respectively.
	\textcolor{blue}{$\bar{x}_j=\left[x_1,x_2,...,x_j\right]^T \in \mathbb{R}^j$ denotes the first $j$ system states.
	$d_j(t)$, $j=1,2,...,n$ represent the unknown external disturbances.}
	$f_j: \mathbb{R}^j \rightarrow \mathbb{R}$ and $\psi_j: \mathbb{R}^j \rightarrow \mathbb{R}$ are the known and the unknown mappings, respectively.
	The control objective is to force the output $x_1$ to track a reference signal $x_r(t)$ \textcolor{blue}{under} the following assumption.
	\begin{assumption}
		The reference signal $x_r(t)$ and its first $n$ order derivatives, as well as the external disturbances $d_j(t)$, $j=1,2,...,n$, are piecewise continuous and bounded.
	\end{assumption}

	\textcolor{blue}{To obtain the parametric model for adaptive control design, we first utilize RBFNN to approximate the unknown mappings $\psi_j(\bar{x}_j)$ for $x \in \Omega$ with $\Omega$ of a compact set, as shown below,
	\begin{equation}
		\psi_j(\bar{x}_j) = \phi_j(\bar{x}_j)^T \theta_j^* + \delta_j(\bar{x}_j), \text{ } j=1,2,...,n, \label{RBFNN}
	\end{equation}
	where $\phi_j(\bar{x}_j)=[\phi_{j,1}(\bar{x}_j),...,\phi_{j,p}(\bar{x}_j)]^T$ is the $j$th regressor, with $\phi_{j,l}(\bar{x}_j) = \exp\left[-(\bar{x}_j-\upmu_{j,l})^T(\bar{x}_j-\upmu_{j,l})/\upeta_{j,l}^2\right]$, $l=1,2,...,p$.
	Here, $\upmu_{j,l} \in \mathbb{R}^p$ and $\upeta_{j,l} \in \mathbb{R}^+$ represent the center and width of the radius basis function $\phi_{j,l}(\bar{x}_j)$.
	$\theta_j^* \in \mathbb{R}^p$ denotes the optimal weights of the neural network, and $\delta_j(\bar{x}_j)$ is the approximation error.
	We denote $\Delta_j(\bar{x}_j,t)=\delta_j(\bar{x}_j)+d_j(t)$ as the non-parametric uncertainties of the $j$th subsystem.
	According to Assumption 2 and the universal approximation theorem \cite{ge2002stable}, $\Delta_j(\bar{x}_j,t)$ remains bounded for all $x \in \Omega$.
	For the sake of convenience, in what follows, $f_j(\bar{x}_j)$, $\psi_j(\bar{x}_j)$, $\phi_j(\bar{x}_j)$ and $\Delta_j(\bar{x}_j,t)$ are abbreviated as $f_j$, $\psi_j$, $\phi_j$ and $\Delta_j$, respectively.
	To construct a LRE} for the high-order uncertain system (\ref{high_order}), \textcolor{blue}{we first present its compact form as shown below,}
	\begin{equation}
		\dot{x} = F(x,u) + \Phi(x)^T \theta^* + \Delta(x,t), \label{compact}
	\end{equation}
	where the expressions of the mapping $F(x,u)$, the regressor $\Phi(x)$, \textcolor{blue}{the unknown optimal weight vector $\theta^*$ and the non-parametric uncertain term $\Delta(x,t)$ are presented as follows,}
	\begin{equation}
		\begin{aligned}
			F(x,u) &= \left[ x_2+f_1,...,x_n+f_{n-1}, u+f_n \right]^T, \\
			\Phi(x) &= \text{blockdiag}\left( \phi_1,\phi_2,...,\phi_n \right),   \\
			\textcolor{blue}{\theta^*} &= \textcolor{blue}{\left[ \theta_1^*,\theta_2^*,...,\theta_n^* \right]^T,}  \\
			\textcolor{blue}{\Delta(x,t)}&= \textcolor{blue}{\left[ \Delta_1,\Delta_2,...,\Delta_n \right]^T.}
		\end{aligned}
		\label{compactform}
	\end{equation}
	Then the transformed system model is given below,
	\begin{equation}
		\varpi_h(\dot{x},x,u) = \Phi(x) \Phi(x)^T\theta^* \textcolor{blue}{+ \Phi(x)\Delta(x,t),}
		\label{transfor_h1}
	\end{equation}
	where $\varpi_h(\dot{x},x,u)=\Phi(x) \left(\dot{x}-F(x,u) \right)$.
	\textcolor{blue}{By utilizing} the transformed system model (\ref{transfor_h1}) and the decomposed LRE (\ref{decomposition}), the LRE is designed as follows,
	\begin{equation}
		\mathcal{Z}(t) = \mathcal{W}(t) \theta^* \textcolor{blue}{+ \Delta_{\Phi}(t),} \label{LREh} 
	\end{equation}
	\textcolor{blue}{where $\mathcal{Z}(t)$ and $\mathcal{W}(t)$ are obtained by the following ODEs,}
	\begin{align}
		\dot{\mathcal{Z}} &= \varpi_h(\dot{x},x,u) - \sum_{k=1}^{h(t)} \textcolor{blue}{\frac{\beta(\lambda_k,x)}{\lambda_k(t)} E_k(t) \mathcal{Z}(t),} \label{Zh} \\
		\dot{\mathcal{W}} &= \Phi(x) \Phi(x)^T- \sum_{k=1}^{h(t)} \textcolor{blue}{\beta(\lambda_k,x) E_k(t).} \label{Wh} 
	\end{align}
	\textcolor{blue}{Here, $\Delta_{\Phi}(t)$ represents the bounded unknown vector caused by the external disturbances and the approximation errors.}

	\textcolor{blue}{We denote $\alpha_{j}$ as the $j$th virtual control signal and $\Phi_j(x) = \text{blockdiag}\left( O^{p \times 1},...,\phi_j(x),...,O^{p \times 1} \right)$ as the $j$th regressor.
	Considering that the main design thought is similar to Section \uppercase\expandafter{\romannumeral2} and the references \cite{krstic1995nonlinear,ge2002stable,swaroop2002dynamic}, the composite learning adaptive dynamics surface control law is summarized in Table. \ref{HIGH-CE}, followed by a brief stability analysis.
	Here, $k_j$, $k_{\theta}$, $\gamma$, $\mu$, $\iota \in \mathbb{R}^+$ are design parameters.}
	
	\begin{table}[!htbp]
		\small
		\caption{\textcolor{blue}{Adaptive Control Design}}
		\begin{tabular}{p{8.5cm}}  
			\hline
			\hline
			\vspace{0cm}
			\textcolor{blue}{Control Error Variables:} \\
			\vspace{-0.4cm} \textcolor{blue}{
			\begin{equation}
				\begin{aligned}
					z_1 =& x_1 - x_r,   \\
					z_j =& x_j - \bar{\alpha}_{j-1} -x_r^{(j-1)}, \text{ } j=2,...,n.
				\end{aligned}
				\vspace{-0.2cm}
				\label{z_i}
			\end{equation}}  \\
			\hline
			\vspace{0cm}
			\textcolor{blue}{Stable Filters:} \\
			\vspace{-0.4cm} \textcolor{blue}{
			\begin{equation}
				\begin{aligned}
					\dot{\bar{\alpha}}_j &= \iota \left( \alpha_j-\bar{\alpha}_j \right) + z_j,  \\
					\tilde{\alpha}_j &= \bar{\alpha}_j - \alpha_j, \text{ } j=1,...,n-1.
				\end{aligned}
				\vspace{-0.2cm}
				\label{filter}
			\end{equation}}  \\
			\hline
			\vspace{0cm}
			\textcolor{blue}{Virtual Control Signals:} \\
			\vspace{-0.5cm} \textcolor{blue}{
			\begin{equation}
				\begin{aligned}
					\alpha_1 =& -k_1 z_1 - f_1 - \Phi_1^T \hat{\theta},  \\
					\alpha_j =&  -z_{j-1} -k_j z_j - f_j - \Phi_j^T \hat{\theta} - \iota \tilde{\alpha}_{j-1}, \text{ } j=2,...,n.
				\end{aligned}
				\vspace{-0.2cm}
				\label{CE_controller}
			\end{equation}}  \\
			\hline
			\vspace{0cm}
			\textcolor{blue}{Adaptive Control Law:} \\
			\vspace{-0.4cm} \textcolor{blue}{
			\begin{equation}
				\begin{aligned}
					u =& \alpha_n +x_r^{(n)}, \\
					\dot{\hat{\theta}} 
					=& \gamma \Phi(x)z + \gamma k_{\theta} \left( \mathcal{Z}(t) - \mathcal{W}(t) \hat{\theta} \right) - \mu \mathcal{W}^{\bot}(t) \hat{\theta}.
				\end{aligned}
				\vspace{-0.3cm}
				\label{adaptive-law-h}
			\end{equation}}  \\
			\hline
			\hline
		\end{tabular}
		\label{HIGH-CE}
	\end{table}

	\begin{theorem}
		Considering the high-order uncertain system (\ref{high_order}) and a reference signal $x_r(t)$ under \textcolor{blue}{non-persistent partial excitation and} Assumption 2, by employing the \textcolor{blue}{composite learning adaptive dynamic surface control law (\ref{adaptive-law-h}), the resulting closed-loop system is semi-globally stable.}
	\end{theorem}
	\begin{proof}
		\textcolor{blue}{We denote $\hat{\theta}$ and $\tilde{\theta}$ as the parameter estimate and the estimation error of $\theta^*$, respectively.
		Similar to Section \uppercase\expandafter{\romannumeral2}, the parameter estimation error is decomposed into the excited component $\tilde{\theta}_e$ and the unexcited component $\tilde{\theta}_u$.
		Then a Lyapunov function is designed as shown below,
		\begin{equation}
			V_h 
			= \frac{1}{2} \sum_{j=1}^{n} z_j^2 + \frac{1}{2} \sum_{j=1}^{n-1} \tilde{\alpha}_j^2
			+ \frac{1}{2 \gamma} \tilde{\theta}_e^T \tilde{\theta}_e + \frac{1}{2 \mu} \tilde{\theta}_u^T \tilde{\theta}_u.
			\label{V_h}
		\end{equation}
		By substituting the system model (\ref{high_order}), the RBFNN (\ref{RBFNN}), the LRE (\ref{LREh}) and the adaptive control law (\ref{z_i})-(\ref{adaptive-law-h}), the derivative of (\ref{V_h}) over the intervals $(t_i,t_{i+1})$ is calculated as follows,
		\begin{equation}
			\begin{aligned}
				\dot{V}_h 
				=& - \sum_{j=1}^{n} k_j z_j^2 - \sum_{j=1}^{n-1} \iota \tilde{\alpha}_j^2 
				- k_{\theta} \tilde{\theta}_e^T \mathcal{W}(t) \tilde{\theta}_e 
				- \tilde{\theta}_u^T \tilde{\theta}_u  \\
				& + \sum_{j=1}^{n} z_j \Delta_j
				- \sum_{j=1}^{n-1} \tilde{\alpha}_j \dot{\alpha}_j 
				+ k_{\theta} \tilde{\theta}^T \Delta_{\Phi} + \tilde{\theta}_u^T \theta_u. 
			\end{aligned}
			\label{dot-V-h-1}
		\end{equation}
		Then we obtain the following inequalities by utilizing the Young's inequality,
		\begin{equation}
			\begin{aligned}
				z_j \Delta_j
				&\leq \epsilon_1 z_j^2 + \varepsilon_1 \Delta_j^2, \text{ } j=1,...,n, \\
				- \tilde{\alpha}_j \dot{\alpha}_j
				&\leq \epsilon_2 \tilde{\alpha}_j^2 + \varepsilon_2 \dot{\alpha}_j^2, \text{ } j=1,...,n-1, \\
				\tilde{\theta}^T \Delta_{\Phi}
				&\leq \epsilon_3 \tilde{\theta}_e^T \tilde{\theta}_e + \epsilon_3 \tilde{\theta}_u^T \tilde{\theta}_u
				+ \varepsilon_3 \Delta_{\Phi}^T \Delta_{\Phi}, \\
				\tilde{\theta}_u^T \theta_u 
				&\leq \epsilon_4 \tilde{\theta}_u^T \tilde{\theta}_u + \varepsilon_4 \theta_u^T \theta_u,
			\end{aligned}
			\label{Youngs}
		\end{equation}
		where $\epsilon_{*}$, $\varepsilon_{*} \in \mathbb{R}^+$ are constant parameters satisfying $\epsilon_{*} \varepsilon_{*} = 1/4$ for $*=1,2,3,4$.
		Substituting (\ref{Youngs}) into (\ref{dot-V-h-1}), we obtain,
		\begin{equation}
			\begin{aligned}
				\dot{V}_h 
				\leq& - \sum_{j=1}^{n} \left(k_j-\epsilon_1 \right) z_j^2 
				- \sum_{j=1}^{n-1} \left(\iota-\epsilon_2 \right) \tilde{\alpha}_j^2  \\
				&- k_{\theta} \left( \min \left( \lambda_{\min}^+(\mathcal{W}), \sigma_{\min} \right) - \epsilon_3 \right) \tilde{\theta}_e^T \tilde{\theta}_e  \\
				&- \left( 1 - \epsilon_3 k_{\theta} - \epsilon_4 \right) \tilde{\theta}_u^T \tilde{\theta}_u + \Lambda(t)  \\
				=& -c_h(t) V_h + \Lambda(t), 
			\end{aligned}
			\label{dot-V-h-2}
		\end{equation}
		where the expressions of $c_h(t)$ and $\Lambda(t)$ are given as follows,
		\begin{equation}
			c_h(t) = 2 \min \left( 
			\begin{aligned}
				&k_j-\epsilon_1, \iota - \epsilon_2, \mu \left( 1 - \epsilon_3 k_{\theta} - \epsilon_4 \right), \\
				&\gamma k_{\theta} \left( \min \left( \lambda_{\min}^+(\mathcal{W}), \sigma_{\min} \right) - \epsilon_3 \right),
			\end{aligned}
			\right),
		\end{equation}
		\begin{equation}
			\Lambda(t)
			= \varepsilon_1 \sum_{j=1}^{n} \Delta_j^2 
			+ \varepsilon_2 \sum_{j=1}^{n-1} \dot{\alpha}_j^2 
			+ \varepsilon_3 k_{\theta} \Delta_{\Phi}^T \Delta_{\Phi}
			+ \varepsilon_4 \theta_u^T \theta_u.
		\end{equation}
		It is evident that there exist proper parameters $\epsilon_{*}$, $\varepsilon_{*}$, $*=1,2,3,4$ such that $c_h(t)$ remains positive during the intervals $(t_i,t_{i+1})$.
		When the closed-loop signals belong to a compact set $\Omega_h = \left\{ x,\tilde{\alpha}_j | x^Tx + \sum_{j=1}^{n-1} \tilde{\alpha}_j^2 \leq \varUpsilon  \right\}$, the redundant term $\Lambda(t)$ is upper-bounded by an unknown positive constant.
		It can be concluded from (\ref{V_h}) and (\ref{dot-V-h-2}) that all the closed-loop signals converge to a compact set during the intervals $(t_i,t_{i+1})$.
		Noticing that the jumping values of the Lyapunov function (\ref{V_h}) at $t_i$ are bounded, the closed-loop system is semi-globally stable during the entire time frame $[0,\infty)$.}
	\end{proof}
	
	\begin{remark}
		\textcolor{blue}{Compared with the existing results of adaptive dynamic surface control \cite{wang2005neural,peng2013adaptive}, this study employs composite learning technique to construct the negative definite term for the excited parameter estimation error component $\tilde{\theta}_e$, rather than adopting the $\sigma$-modification method and the Young's inequality.
		The positive term regarding $\theta_e^T \theta_e$ can be removed from the derivative of the Lyapunov function, thereby enhancing the estimation and control performance.
		In fact, by utilizing composite learning technique to construct the negative definite term of the excited parameter estimation error component, the control performance of many adaptive control systems can be enhanced.}
	\end{remark}
	
	\section{Simulation Results}
	\textcolor{blue}{To illustrate the effectiveness of the theoretical findings, we present comparative simulation results for three uncertain systems, which correspond to Theorems 1 - 3, respectively.
	In the simulations, the sampling time is set as $0.001\text{s}$, the threshold of effective excitation richness is set as $0.01$.}

	\subsection{Simulation Results of a Two-Link Planar Robot Arm}
	\textcolor{blue}{We consider a two-link planar robot arm \cite{pan2018composite}, which can be represented by the following model,
	\begin{equation}
		M(q) \ddot{q} + C(q,\dot{q}) \dot{q} + D \dot{q} + G(q) = \tau, \label{robot_arm}
	\end{equation}
	where $q=[q_1,q_2]^T$ and $\tau=[\tau_1,\tau_2]^T$ are the joint angular positions and control torques, respectively.
	The inertia matrix $M(q)$, the centripetal-Coriolis matrix $C(q,\dot{q})$, the friction matrix $D$ and the gravitational matrix $G(q)$ are given below,
		\begin{align}
			&M(q) = \left[ \begin{matrix}
				m_{11} & m_{12} \\
				m_{21} & m_{22}
			\end{matrix} \right], \text{ }
			D = \left[ \begin{matrix}
				k_{v1} & 0 \\
				0 & k_{v2}
			\end{matrix} \right], \nonumber \\
			&C(q,\dot{q}) =  \left[ \begin{matrix}
				-m_2 l_1 l_{c2} \dot{q}_2 \sin q_2 & -m_2 l_1 l_{c2} (\dot{q}_1+\dot{q}_2) \sin q_2 \\
				 m_2 l_1 l_{c2} \dot{q}_1 \sin q_2 & 0
			\end{matrix} \right], \nonumber \\
			&G(q) = g \left[ \begin{matrix}
				m_1 l_{c1} \cos q_1 + m_2 ( l_{c2} \cos(q_1+q_2) + l_1 \cos q_1 )\\
				m_2 l_{c2} \cos(q_1+q_2)
			\end{matrix} \right], \nonumber \\
			&m_{11} = m_1 l_{c1}^2 +\mathcal{I}_1 +\mathcal{I}_2+m_2(l_1^2+l_{c2}^2+2l_1l_{c2} \cos q_2), \nonumber \\
			&m_{12} = m_{21} = m_2 l_{c2}^2 + \mathcal{I}_2 + m_2 l_2 l_{c2} \cos q_2, \nonumber \\
			&m_{22} = m_2 l_{c2}^2 + \mathcal{I}_2.
		\end{align}
	Here, for the $j$th link, $m_j$ is the mass, $l_j$ is the length, $\mathcal{I}_j$ is the moment of inertia, $k_{vj}$ is the coefficient of viscous friction, $l_{cj}$ is the distance from the previous joint to the centroid of the $j$th link.
	We assume these model parameters to be unknown, and introduce an auxiliary function as shown below,
	\begin{equation}
		H(q,\dot{q},v,\dot{v}) = M(q) \dot{v} + C(q,\dot{q}) v + D \dot{q} + G(q), \label{robot_auxiliary}
	\end{equation}
	where $v$ is an auxiliary variable.
	The composite learning adaptive control law \cite{pan2018composite} is introduced in Table. \ref{ROBOT-CONTROL}.}
	\begin{table}[htbp]
		\small
		\caption{\textcolor{blue}{Adaptive Control Design for Robot Arm}}
		\begin{tabular}{p{8.5cm}}  
			\hline
			\hline
			\vspace{0cm}
			\textcolor{blue}{Control Error Variables:} \\
			\vspace{-0.4cm} \textcolor{blue}{
			\begin{equation}
				\begin{aligned}
					e =& q - q_r,   \\
					e_f =& \dot{e} + k_f e.
				\end{aligned}
				\vspace{-0.2cm}
				\label{robot-error}
			\end{equation}}  \\
			\hline
			\vspace{0cm}
			\textcolor{blue}{Linearly Parameterized Robot Model:} \\
			\vspace{-0.4cm} \textcolor{blue}{
			\begin{equation}
				H(q,\dot{q},v,\dot{v}) = \Phi (q,\dot{q},v,\dot{v})^T \theta.
				\vspace{-0.2cm}
				\label{LPRM}
			\end{equation}
			where $\Phi (q,\dot{q},v,\dot{v})$ and $\theta$ are provided below,
			\vspace{-0.2cm}
			\begin{equation}
				\begin{aligned}
					\Phi (q,\dot{q},v,\dot{v}) =& \left[ \begin{matrix}
						\dot{v}_1 & \dot{v}_2 & \phi_{13} & \phi_{14} & \phi_{15} & \dot{q}_1 & 0 \\
						0 & \dot{v}_1+\dot{v}_2 & \phi_{23} & 0 & \phi_{25} & 0 & \dot{q}_2
					\end{matrix} \right]^T, \\
					\theta =& [\theta_1,\theta_2,\theta_3,\theta_4,\theta_5,k_{v1},k_{v2}]^T, \\
				\end{aligned}
				\vspace{-0.1cm}
				\nonumber
			\end{equation}  
			\begin{equation}
				\begin{aligned}
					\phi_{13} =& (2 \dot{v}_1 + \dot{v}_2) \cos q_2 - (\dot{q}_2 v_1 +(\dot{q}_1+\dot{q}_2)v_2) \sin q_2, \\
					\phi_{14} =& g \cos q_1, \text{ }
					\phi_{15} = \phi_{25} = g \cos (q_1+q_2), \\
					\phi_{23} =& \dot{v}_1 \cos q_2 + \dot{q}_1 v_1 \sin q_2, \\
					\theta_1 =& \mathcal{I}_1 + m_1 l_{c1}^2 + m_2 l_1^2 + \mathcal{I}_2 + m_2 l_{c2}^2, \text{ }
					\theta_2 = m_2 l_{c2}^2 + \mathcal{I}_2, \\
					\theta_3 =& m_2 l_1 l_{c2}, \text{ }
					\theta_4 = m_1 l_{c1} + m_2 l_1, \text{ }
					\theta_5 = m_2 l_{c2}.
				\end{aligned}
				\vspace{-0.2cm}
				\nonumber
			\end{equation}} \\
			\hline
			\vspace{0cm}
			\textcolor{blue}{Linear Regression Euqation: $Z_r(t) = W_r(t) \theta$} \\
			\textcolor{blue}{Low-Pass Filters:} \\
			\vspace{-0.3cm} \textcolor{blue}{
			\begin{equation}
				\tau_f = \frac{a}{s+a} \tau, \text{ }
				\Phi_f = \frac{a}{s+a} \Phi(q,\dot{q},\ddot{q}).
				\vspace{-0.2cm}
				\label{LPFs}
			\end{equation}}  \\
			\hline
			\vspace{0cm}
			\textcolor{blue}{Composite Learning Adaptive Control Law:} \\
			\vspace{-0.5cm} \textcolor{blue}{
			\begin{equation}
				\hspace{-0.25cm}
				\begin{aligned}
					\tau =& -k_q e_f + \Phi(q,\dot{q},v,\dot{v})^T \hat{\theta},  \\
					\dot{\hat{\theta}} =& \gamma \Phi(q,\dot{q},v,\dot{v}) e_f + \gamma k_{\theta} \big(Z_r(t) - W_r(t) \hat{\theta}\big) - \mu W_r ^{\bot}(t) \hat{\theta},  \\
					v =& \dot{q}_d - k_f e.
				\end{aligned}
				\vspace{-0.2cm}
				\label{ROBOT-ADAPTIVE}
			\end{equation}}  \\
			\hline
			\hline
		\end{tabular}
		\label{ROBOT-CONTROL}
	\end{table}

	\textcolor{blue}{We consider three distinct LREs for comparison, the initial values are all set as $Z_r(0)=O^{p \times 1}$, $W_r(0)=O^{p \times p}$.
	The vector $Z_r(t)$ and excitation matrix $W_r(t)$ are obtained by utilizing the ODEs in the following cases.\\
	\textbf{Case A. I.} The LRE without forgetting factor \cite{pan2016composite}.
	\begin{equation}
		\dot{Z}_r = \Phi_f \tau_f, \text{ } 
		\dot{W}_r = \Phi_f \Phi_f^T.
	\end{equation}
	\textbf{Case A. II.} The LRE with a dependent forgetting factor \cite{cho2018composite}.
	\begin{equation}
		\dot{Z}_r = \Phi_f \tau_f - \beta Z, \text{ } 
		\dot{W}_r = \Phi_f \Phi_f^T - \beta W .
	\end{equation}
	\textbf{Case A. III.} The LRE with independent forgetting factors,
	\begin{equation}
		\begin{aligned}
			\dot{Z}_r &= \Phi_f \tau_f - \sum_{k=1}^{h(t)} \frac{\beta(\lambda_k,x)}{\lambda_k(t)} E_k(t) Z_r(t), \\
			\dot{W}_r &= \Phi_f \Phi_f^T- \sum_{k=1}^{h(t)} \beta(\lambda_k,x) E_k(t).
		\end{aligned}
	\end{equation}
	The model parameters are set to be $m_1 = 2 \text{kg}$, $m_2 = 1.4 \text{kg}$, $l_1 = l_2 = 0.8 \text{m}$, $l_{c1} = l_{c2} = 0.4 \text{m}$, $\mathcal{I}_1 = 0.5 \text{kg} \cdot \text{m}^2$, $\mathcal{I}_2 = 0.1 \text{kg} \cdot \text{m}^2$, $g = 9.8 \text{N}/\text{s}^2$, $k_{v1} = 0.8 \text{N} \cdot \text{m} \cdot \text{s}$, $k_{v2} = 1.6 \text{N} \cdot \text{m} \cdot \text{s}$.
	The design parameters are chose as $\sigma_{\min}=5$, $\sigma_{\max}=10$, $k_q = 10$, $k_f = 3$, $k_{\theta} = 10$, $a=5$, $\beta=5$, $\gamma = 5$, $\mu=5$.
	The reference signal of $q$ is set as follows,
	\begin{equation}
		q_r(t) = 
		\begin{cases}
			\frac{\pi}{2} [\sin(t), \cos(t)]^T, &\text{ if } t < 2\pi, \\
			\frac{\pi}{2} [\sin(2t), \cos(2t)]^T, &\text{ if } t \geq 2\pi.
		\end{cases}
	\end{equation}
	The simulation results for Cases A. I - III are presented in Figs. \ref{Simu_first_non} - \ref{Simu_first_independent}.
	The first six sub-figures illustrate the system states, the control errors and the control torques, respectively. 
	The seventh sub-figure shows the maximum and minimum eigenvalues of the excitation matrix, the last two sub-figures present the estimates of the uncertain term $H(q,\dot{q},\ddot{q})$ in (\ref{robot_auxiliary}).}

	\begin{figure}[!htbp]
		\vspace{-0.2cm}
		\centering
		\includegraphics[width=8.75cm]{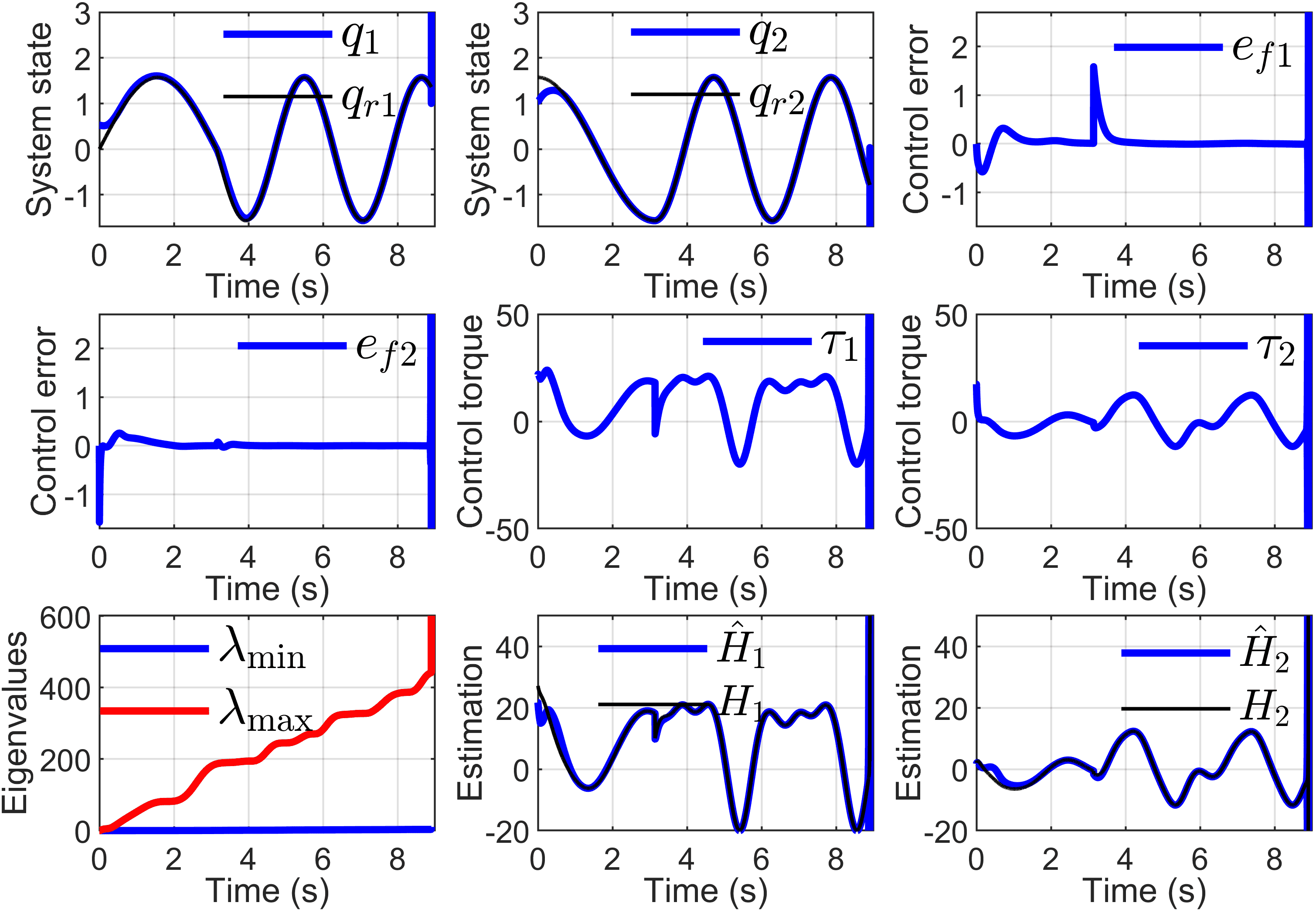}
		\vspace{-0.4cm}
		\caption{\textcolor{blue}{Simulation results in Case A. I.}} 
		\label{Simu_first_non}
		\vspace{-0.2cm}
	\end{figure}
	\begin{figure}[!htbp]
		\vspace{-0.2cm}
		\centering
		\includegraphics[width=8.75cm]{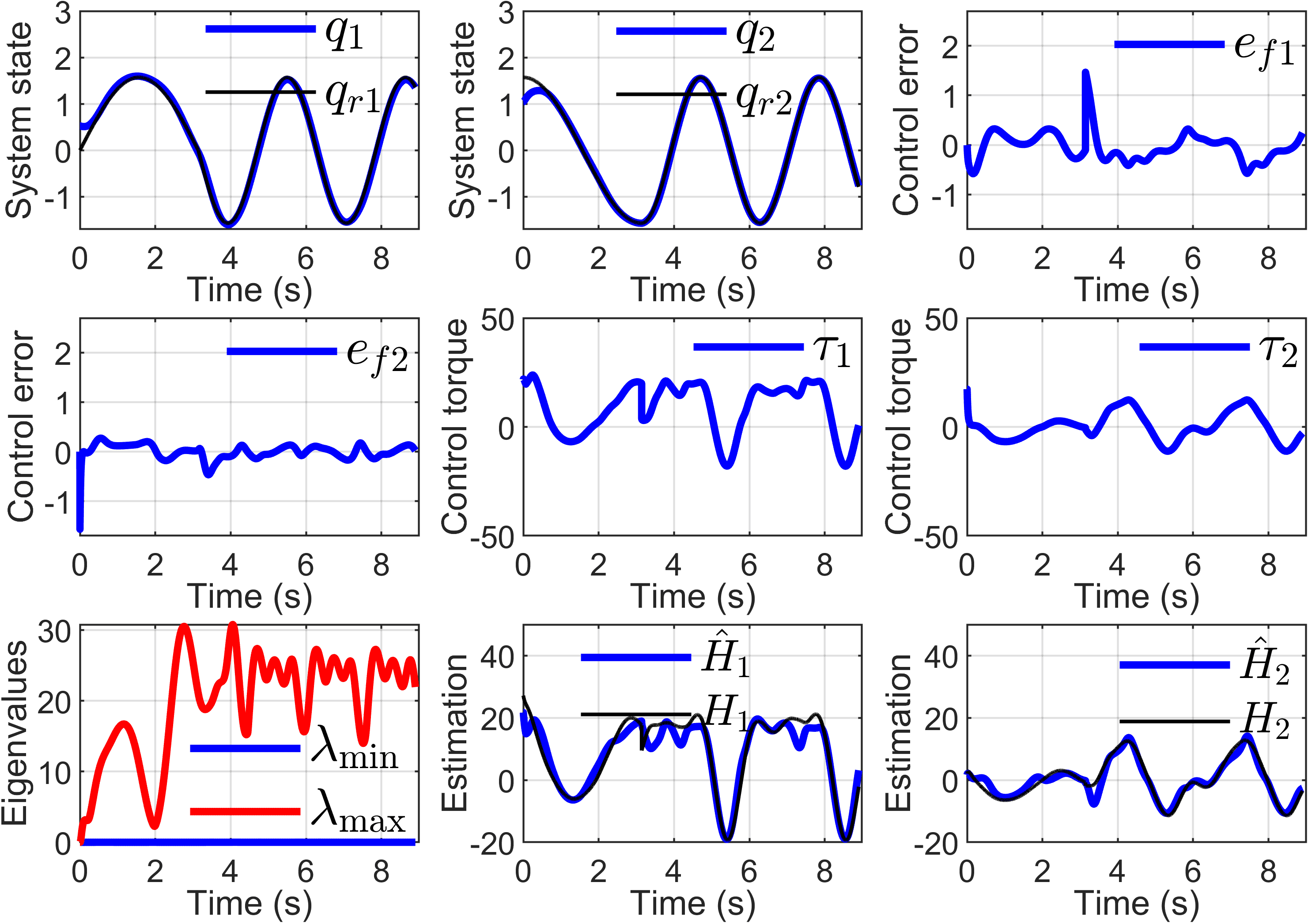}
		\vspace{-0.4cm}
		\caption{\textcolor{blue}{Simulation results in Case A. II.}}
		\label{Simu_first_dependent}
		\vspace{-0.2cm}
	\end{figure}
	\begin{figure}[!htbp]
		\vspace{-0.2cm}
		\centering
		\includegraphics[width=8.75cm]{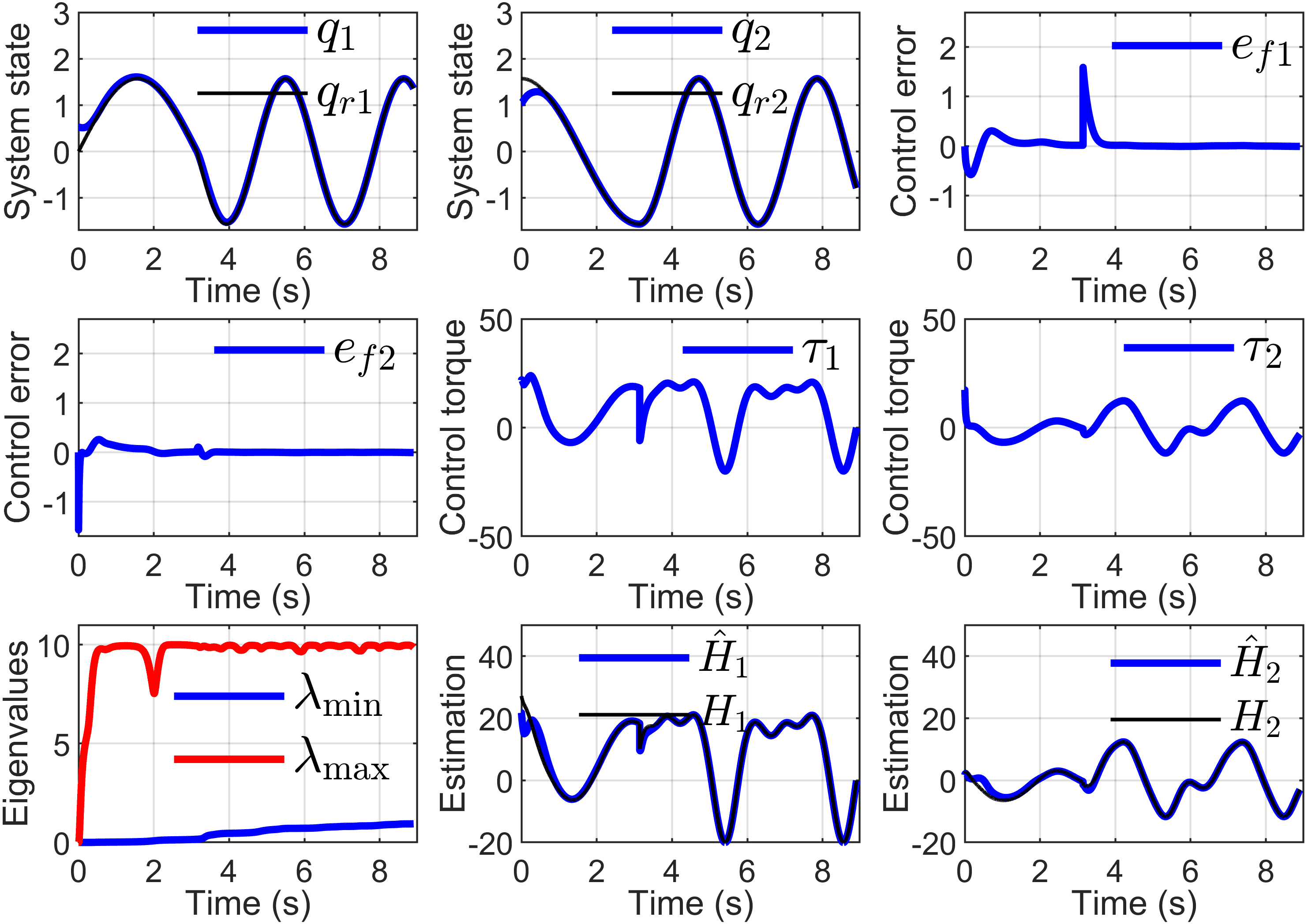}
		\vspace{-0.4cm}
		\caption{\textcolor{blue}{Simulation results in Case A. III.}} 
		\label{Simu_first_independent}
	\end{figure}
		
	\textcolor{blue}{It can be observed from Fig. \ref{Simu_first_non} that when we utilize the LRE without forgetting factors, the eigenvalues of the excitation matrix increase without bound. This results in rapid parameter updates and numerical divergence at around $9 \text{s}$.}
	
	\textcolor{blue}{As shown in Fig. \ref{Simu_first_dependent}, when a dependent forgetting factor is applied to the entire LRE, although the boundedness of the eigenvalues is ensured, the weak excitation direction ($\lambda_{\min}$) is unexpectedly weakened. 
	This further affects the estimation and control performance.}
	
	\textcolor{blue}{It can be seen from Fig. \ref{Simu_first_independent} that when we employ the proposed LRE with independent forgetting factors, the excitation information along the weak excitation directions is sufficiently collected, and the eigenvalues of the excitation matrix remain bounded.
	The potential issue of numerical divergence is addressed, and the estimation and control performance are satisfactory.
	The comparative simulation results illustrate and verify the effectiveness of Theorem 1.}

	\subsection{Simulation Results of a First-Order Uncertain System under Sufficient and Non-Persistent Partial Excitation}
	Consider an uncertain nonlinear system (\ref{model}), \textcolor{blue}{with} $f(x)$, $\phi(x)$ and $\theta$ \textcolor{blue}{set} as follows,
	\begin{equation}
		\begin{aligned}
			f(x) &= \left[ x_1,x_2,\sin(x_3) \right]^T, \\
			\phi(x) &= \left[ \begin{matrix}
				x_2 & x_1 & \textcolor{blue}{x_2^2} \text{ }\\
				x_1 & x_2 & \textcolor{blue}{x_1^2} \text{ }\\
				\textcolor{blue}{\sin(x_2)} & \textcolor{blue}{\sin(x_1)} & x_3^2 \text{ }
			\end{matrix} \right], \\
			\theta &= \left[ \theta_1,\theta_2,\theta_3 \right]^T,
		\end{aligned}
		\label{simu_1}
	\end{equation}
	where $x=[x_1,x_2,x_3]^T$ and $u$ are the system states and control input, respectively.
	The  reference signal is set as $x_r(t) = [\sin(t),\sin(t),\cos(t)]^T$.
	
	The design parameters in the \textcolor{blue}{adaptive control law (\ref{transfor_9}), (\ref{forgetting}), (\ref{CE}), (\ref{adaptive-law})} are chosen as $\sigma_{\min}=5$, $\sigma_{\max}=10$ $k_e=5$, \textcolor{blue}{$k_{\theta}=50$, $\gamma=0.05$, $\mu=5$}.
	The unknown parameter \textcolor{blue}{vector} and the initial parameter estimates are set to be $\theta=[1,2,-1]^T$ and $\hat{\theta}(0)=[0.5,0.5,0.5]^T$, respectively.
	We consider two cases with different initial system states, as shown below. \\
	{\bf Case B. I.} $x(0)=[0.5,1.5,0.5]^T$, sufficient excitation.  \\
	{\bf Case B. II.} $x(0)=[2,2,0.5]^T$, insufficient excitation.
	
	\begin{figure}[htbp]
		\vspace{-0.2cm}
		\centering
		\includegraphics[width=8.75cm]{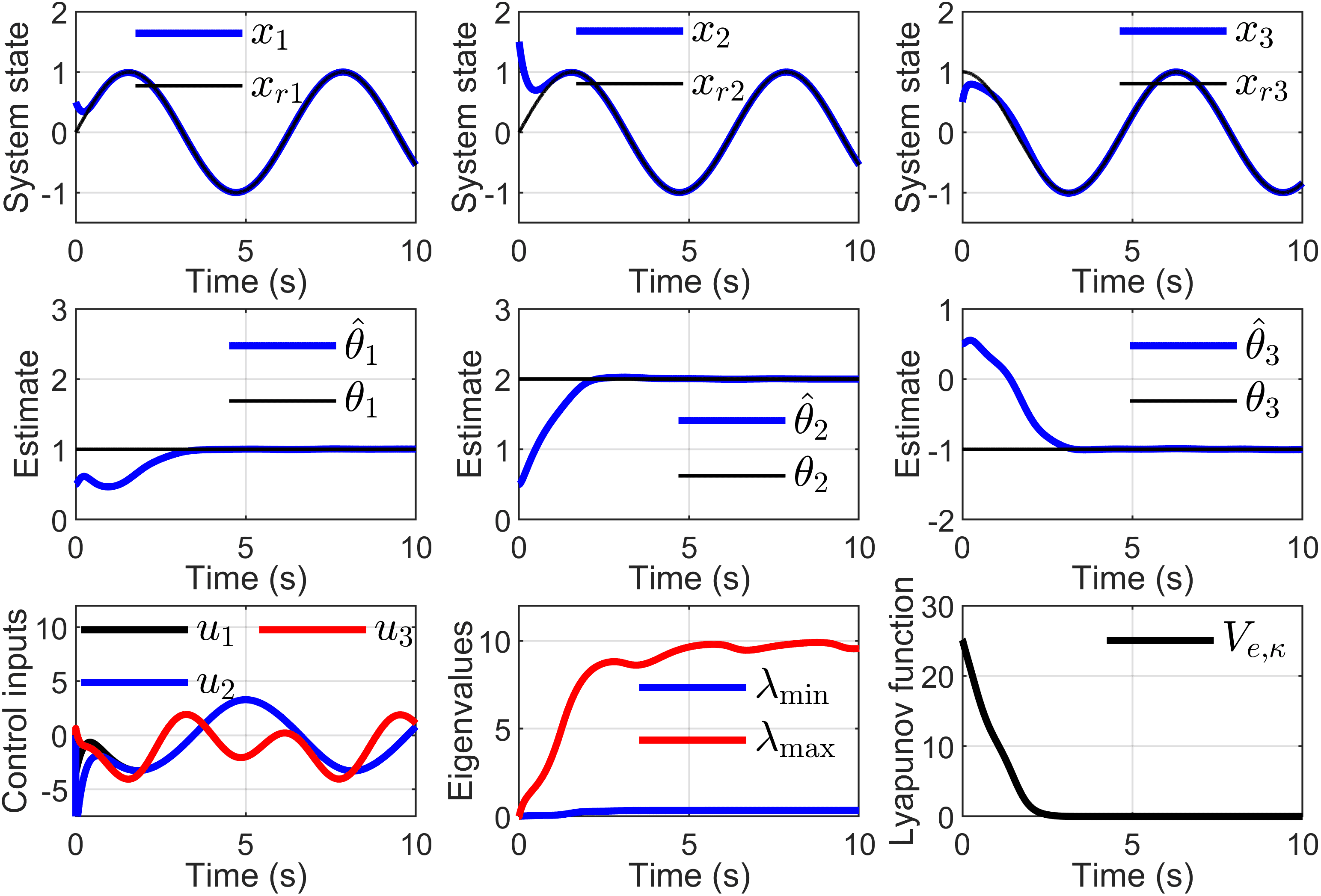}
		\vspace{-0.4cm}
		\caption{\textcolor{blue}{Simulation results in Case B. I.}} 
		\label{Simulation_1}
	\end{figure}

	\begin{figure}[htbp]
		\centering
		\includegraphics[width=8.75cm]{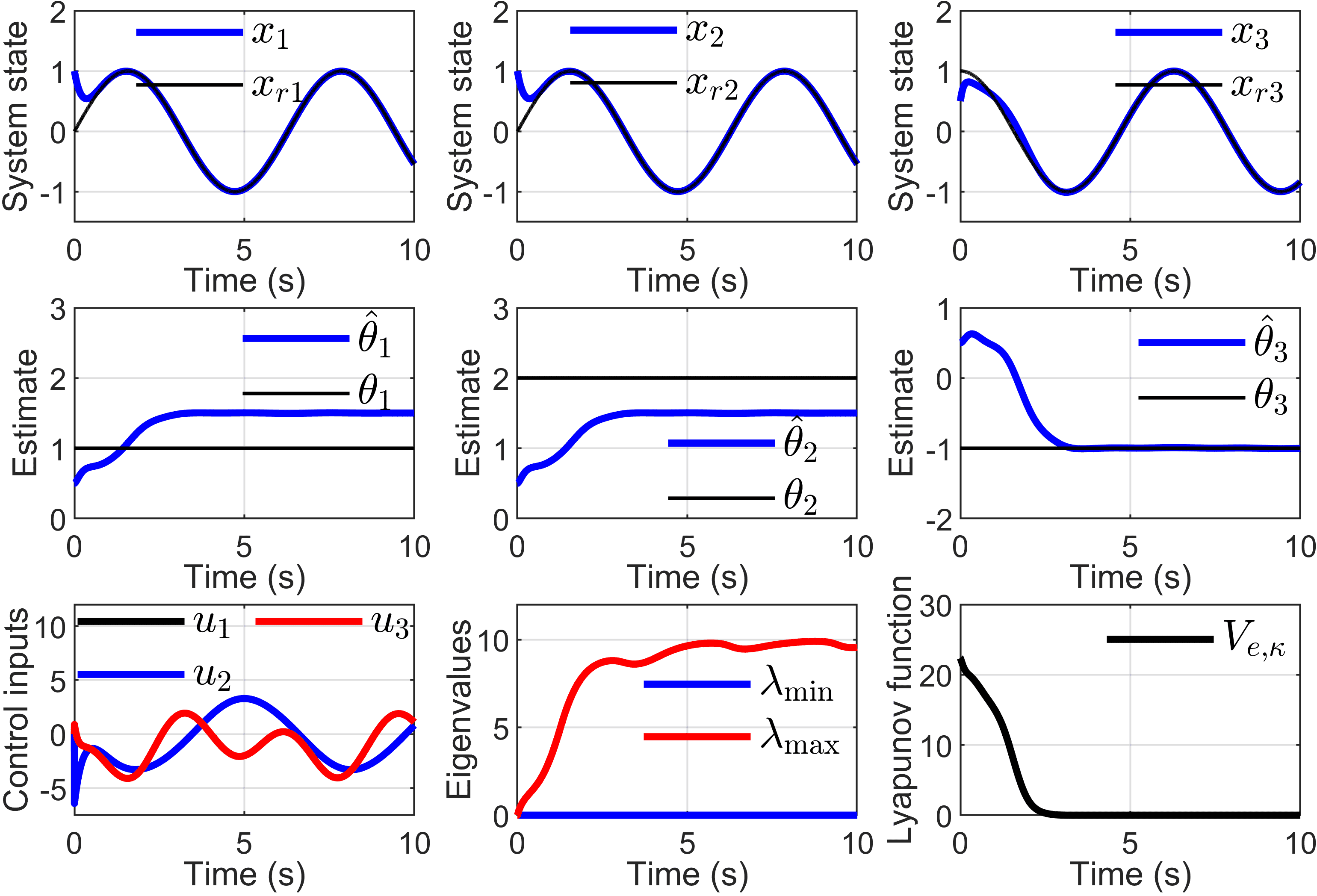}
		\vspace{-0.4cm}
		\caption{\textcolor{blue}{Simulation results in Case B. II.}}
		\label{Simulation_2}
		\vspace{-0.4cm}
	\end{figure}
	
	\textcolor{blue}{The simulation results for Case B. I and Case B. II are presented in Fig. \ref{Simulation_1} and Fig. \ref{Simulation_2}, respectively.
		The first seven sub-figures illustrate the system states, the parameter estimates and the control inputs, respectively. 
		The eighth sub-figure shows the maximum and minimum eigenvalues of the excitation matrix, the last sub-figure presents the Lyapunov function (\ref{Ve}).}
	
	\textcolor{blue}{In Case B. I, it can be verified from the regressor (\ref{simu_1}) and the initial system states that $\text{rank}[\phi(x(0))]=3$, which indicates that the \ref{SE} and \ref{IE} conditions are satisfied.
	It can be observed} from Fig. \ref{Simulation_1} that the tracking control errors and the parameter estimation errors converge to zero, the eigenvalues of the excitation matrix $W(t)$ remain upper-bounded by $\sigma_{\max}$.
	 
	\textcolor{blue}{In Case B. II, noticing that the initial values of $x_1$ and $x_2$ are equivalent, it can be deduced from} the system model (\ref{model}), (\ref{simu_1}) and \textcolor{blue}{the reference signal $x_r(t)$ that $x_1(t)$ and $x_2(t)$ will remain equivalent} for all $t \geq 0$, indicating that the first two columns of the regressor $\phi(x)$ are always equivalent, \textcolor{blue}{and the excitation information will be always insufficient.}
	From Fig. \ref{Simulation_2}, the system states and parameter estimates converge to zero and $[1.5,1.5,-1]^T$, respectively.
	To establish the relation between the theoretical findings and the simulation results, we present a theoretical analysis in the following.
	
	\textcolor{blue}{It can be seen from (\ref{W}) and (\ref{simu_1}) that when $x_1$ and $x_2$ are always equivalent, the term $\phi(x(t)) \phi(x(t))^T$ and the excitation matrix $W(t)$ satisfy the following form for all $t \geq 0$,
	\begin{equation}
		\left[ \begin{matrix}
			a & a & b\\
			a & a & b\\
			b & b & c
		\end{matrix} \right].
	\end{equation}
	Then} the range space and the null space of $W(t)$ \textcolor{blue}{can be readily} calculated as $\mathcal{R}[W(t)]={\rm span}\left\{ [1,1,0]^T,[0,0,1]^T \right\}$ and $\mathcal{N}[W(t)]={\rm span}\left\{ [1,-1,0]^T \right\}$, respectively.
	The \textcolor{blue}{excited and the unexcited components of the initial parameter estimate $\hat{\theta}(0)=[0.5,0.5,0.5]^T$ and the initial estimation error $\tilde{\theta}(0)=[0.5,1.5,-1.5]^T$ are calculated as follows,
	\begin{equation}
		\begin{aligned}
			\hat{\theta}_{e,\kappa}(0) &= {\rm Proj} \big( \hat{\theta}(0), \mathcal{R}[W] \big) = [0.5,0.5,0.5]^T, \\
			\hat{\theta}_{u,\kappa}(0) &= {\rm Proj} \big( \hat{\theta}(0), \mathcal{N}[W] \big) = [0,0,0]^T, \\
			\tilde{\theta}_{e,\kappa}(0) &= {\rm Proj} \big( \tilde{\theta}(0), \mathcal{R}[W] \big) = [1,1,-1.5]^T, \\
			\tilde{\theta}_{u,\kappa}(0) &= {\rm Proj} \big( \tilde{\theta}(0), \mathcal{N}[W] \big) = [-0.5,0.5,0]^T.
		\end{aligned}
	\end{equation}}
	
	According to the parameter update law (\ref{adaptive-law}) and Theorem 2, the excited component $\tilde{\theta}_{e,\kappa}$ will converge to zero, \textcolor{blue}{and the $\mu$-modification term $-\mu \mathcal{W}^{\bot}(t) \hat{\theta}$ remains zero.
	Then, it is clear that the parameter estimation error will converge to $\tilde{\theta}_{u,\kappa}(0)$, and the stead-state value of the parameter estimate will be
	\begin{equation}
		\hat{\theta}_{\infty}=\theta-\tilde{\theta}_{u,\kappa}(0)=[1.5,1.5,-1]^T. \label{simu_eq_1}
	\end{equation}
	The simulation results in Fig. \ref{Simulation_1} and Fig. \ref{Simulation_2} verify the above theoretical analysis and the effectiveness of the proposed composite learning adaptive control scheme.}

	\subsection{Simulation Results of a Third-Order System under Unstructured Uncertainties and External Disturbances}
	Consider a \textcolor{blue}{third-order uncertain} system, as shown below,
	\textcolor{blue}{
	\begin{equation}
		\begin{aligned}
			\dot{x}_1 &= x_2 + \psi_1(x_1) + d_1(t), \\
			\dot{x}_2 &= x_3 + \psi_2(x_2) + d_2(t), \\
			\dot{x}_3 &=  u  + \psi_3(x_3) + d_3(t),
		\end{aligned}
		\label{simu_model}
	\end{equation}
	where $x=[x_1,x_2,x_3]^T$ and $u$ are the system states and control input, respectively.
	The control objective is to force the output $x_1$ to track a reference signal, i.e., $x_r(t) = \sin(t)$.
	The unknown functions and external disturbances are set as 
	\begin{equation}
		\begin{aligned}
			\psi_1(x_1) &= 0.1x_1^2+0.5\sin(x_1), \text{ } d_1(t) = 0.05 \sin(2t), \\
			\psi_2(x_2) &= 0.1x_2^2+0.5\cos(x_2), \text{ } d_2(t) = 0.05 \cos(3t), \\
			\psi_3(x_3) &= 0.1x_3^2+0.5\sin(x_3), \text{ } d_3(t) = 0.05 \sin(4t).
		\end{aligned}
	\end{equation}
	Two distinct cases are considered in the simulation, corresponding to the standard adaptive dynamic surface control scheme and the proposed composite learning adaptive dynamic surface control scheme.}
	Specifically, we employ the same \textcolor{blue}{certainty equivalence} control law in (\ref{adaptive-law-h}) in these two cases, while two distinct parameter update laws are applied. \\
	{\bf Case C. I.} \textcolor{blue}{The ``Lyapunov-based'' parameter update law with $\sigma$-modification term, i.e., $\dot{\hat{\theta}} = \gamma \Phi(x)z - \sigma \hat{\theta}$.}  \\
	{\bf Case C. II.} The proposed parameter update law in (\ref{adaptive-law-h}).
	
	\textcolor{blue}{The design parameters in the RBFNN are chosen as $\upmu_{j,1}=-4$, $\upmu_{j,2}=-2$, $\upmu_{j,3}=0$, $\upmu_{j,4}=2$, $\upmu_{j,5}=4$, $\eta_j = 4$ for $j=1,2,3$.}
	The design parameters in the adaptive controllers are chosen as $\sigma_{\min}=5$, $\sigma_{\max}=10$, \textcolor{blue}{$k_1=1$, $k_2=1$, $k_3=1$, $k_{\theta}=20$, $\gamma=0.5$, $\mu=0.01$, $\sigma=0.01$, $\iota=8$.}
	The initial system states and initial parameter estimates are set as \textcolor{blue}{$x(0)=[0.1,0.9,0.1]^T$ and $\hat{\theta}(0)=O^{15 \times 1}$, respectively.
	The low-pass filters are initialized to be $\bar{\alpha}_1(0)=\alpha_1(0)$, $\bar{\alpha}_2(0)=\alpha_2(0)$.}
	
	\begin{figure}[!htbp]
		\vspace{-0.2cm}
		\centering
		\includegraphics[width=8.75cm]{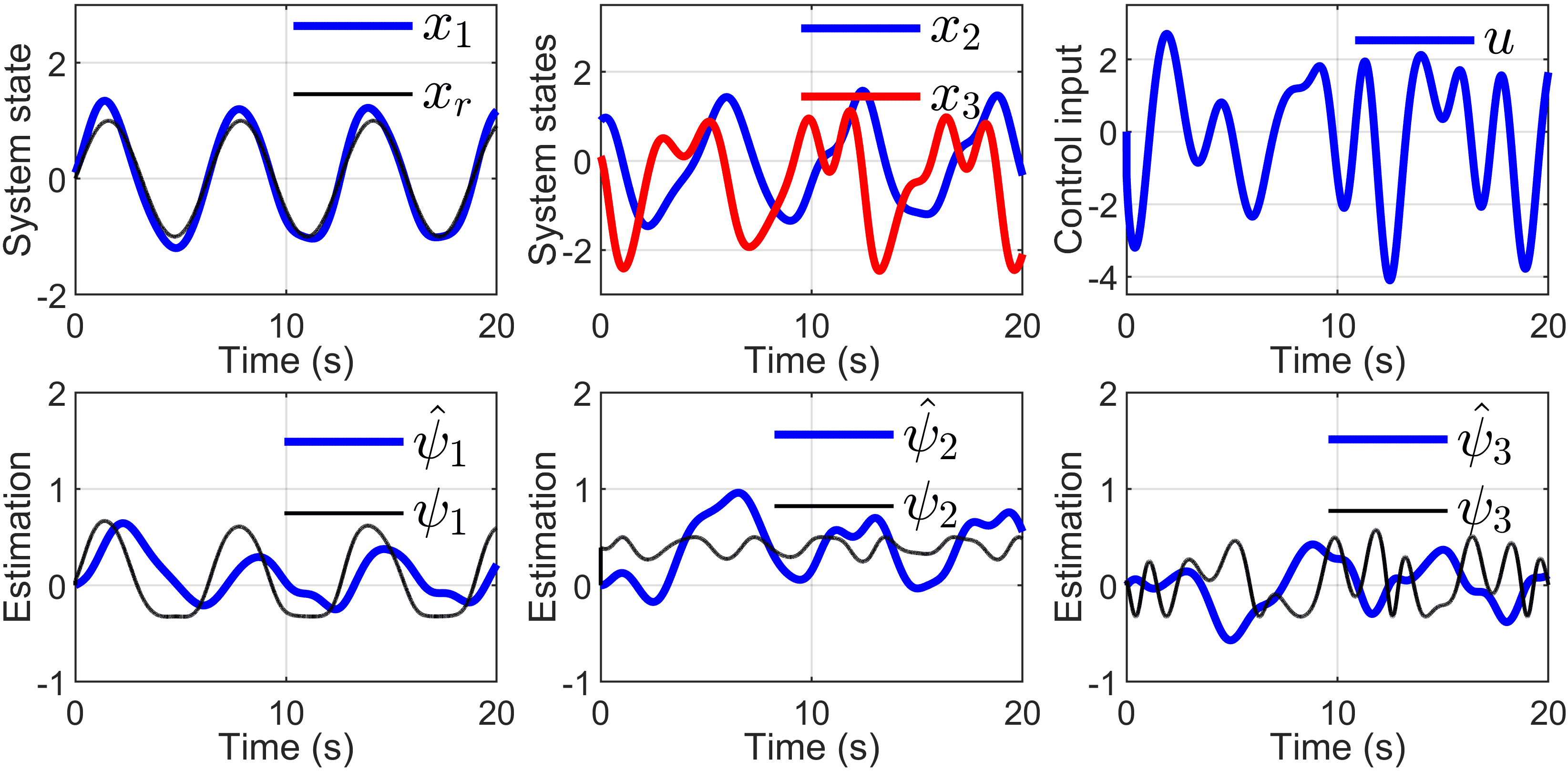}
		\vspace{-0.4cm}
		\caption{\textcolor{blue}{Simulation results in Case C. I.}}
		\label{Simulation_5}
	\end{figure}
	
	\begin{figure}[!htbp]
		\centering
		\includegraphics[width=8.75cm]{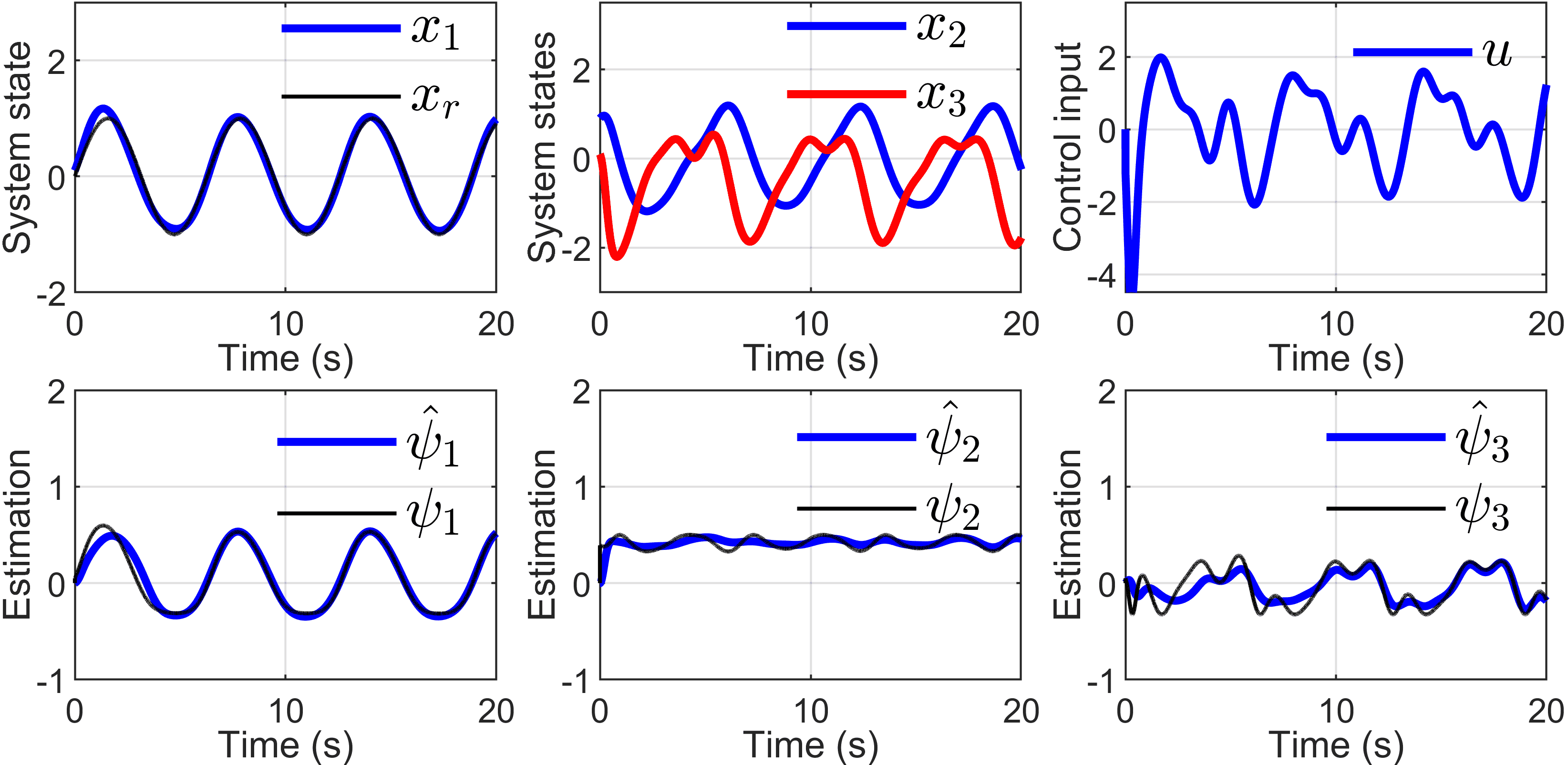}
		\vspace{-0.4cm}
		\caption{\textcolor{blue}{Simulation results in Case C. II.}} 
		\label{Simulation_6}
		\vspace{-0.5cm}
	\end{figure}
	
	Fig. \ref{Simulation_5} and Fig. \ref{Simulation_6} show the system states, \textcolor{blue}{the control inputs and uncertainty} estimates for Case C. I and Case C. II, respectively.
	\textcolor{blue}{Noticing that the RBFNN has $15$ weight parameters, we just provide the estimates for the unknown functions, i.e., $\hat{\psi}_j$, $j=1,2,3$. }
	
	In Case C. I, it can be seen from Fig. \ref{Simulation_5} that \textcolor{blue}{when we apply the standard dynamic surface control scheme \cite{wang2005neural,peng2013adaptive}, all the closed-loop signals remain bounded.
	However, it is clear that the RBFNN cannot accurately approximate the unknown functions, and there exists obvious tracking control error $z_1 = x_1-x_r$ at around $8\text{s}$ and $14\text{s}$.}
	
	In Case C. II, \textcolor{blue}{it can be observed from Fig. \ref{Simulation_6} that when we apply the proposed adaptive dynamic surface control scheme, all the closed-loop signals remain bounded.
	Noticing that the estimation errors and control errors in Case C. II are obviously smaller than that in Case C. I, the proposed scheme performs better than the existing results \cite{wang2005neural,peng2013adaptive}.
	The comparative simulation results illustrate and verify the effectiveness of Theorem 3 and the discussions in Remark 9.}

	\section{Conclusion}
	\textcolor{blue}{Based on the spectral decomposition, composite learning, and $\mu$-modification techniques}, we present a new adaptive control scheme for uncertain nonlinear systems.
	\textcolor{blue}{The linear regression equation sufficiently collects the historical excitation information and ensures the boundedness of the elements.
	The parameter estimation error is decomposed into the excited component and the unexcited component.
	It is demonstrated that the effects of parametric uncertainties are completely eliminated, and the robustness of the closed-loop systems is enhanced.
	Additionally, the proposed scheme is extended to composite learning adaptive dynamic surface control for high-order uncertain systems with unstructured uncertainties.}
	Future research efforts will focus on considering unknown control coefficients and multi-agent systems.

	\appendix
	\begin{proof}
		Since $W$ is a real symmetric matrix, its eigen-subspace of different eigenvalues are orthogonal to each other.
		Denote the orthogonal projection of $\nu$ on the spaces $\mathcal{E}(\lambda_k)$ as
		\begin{equation}
			\nu_k= {\rm Proj} \left( \nu,\mathcal{E}(\lambda_k) \right), \text{ } k=1,2,...,h.
		\end{equation}
		It is obvious that
		\begin{align}
			\nu &= \sum_{k=1}^{h} \nu_k, \label{sum} \\
			\nu_{k_1}^T\nu_{k_2} &=0, \text{ } k_1 \neq k_2, \text{ } 1 \leq k_1,k_2 \leq h. \label{ortho}
		\end{align}
		From (\ref{sum}), (\ref{ortho}) and the condition $\nu_1=0$, we have
		\begin{align}
			\nu^T W \nu 
			&=\sum_{k=2}^{h} \nu_k^T W \sum_{k=2}^{h} \nu_k 
			=\sum_{k=2}^{h} \nu_k^T \sum_{k=2}^{h} \lambda_k \nu_k  
			=\sum_{k=2}^{h} \lambda_k \nu_k^T \nu_k   \nonumber \\
			&\geq \lambda_{\min}^+(W) \sum_{k=2}^{h} \nu_k^T \nu_k
			=\lambda_{\min}^+(W)  \nu^T \nu,
			\label{calculate}
		\end{align}
		where $\lambda_{\min}^+(W)$ is the smallest positive eigenvalue of $W$.
	\end{proof}

	\section*{References}
	\vspace{-0.5cm}
	\bibliography{reference}

	\begin{IEEEbiography}
		[{\includegraphics[width=1in,height=1.25in,clip,keepaspectratio]{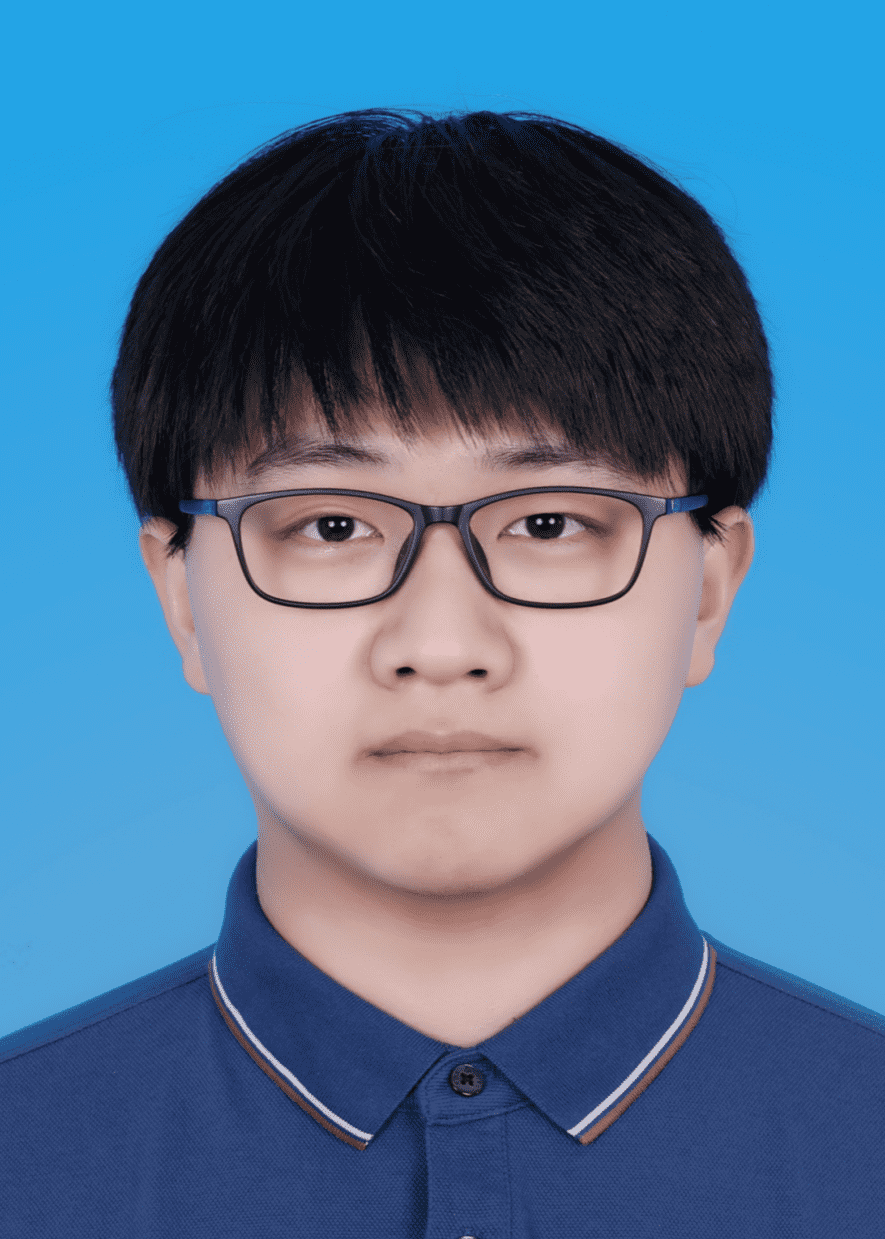}}] 
		{Jiajun Shen} received the B.Eng. degree in automation from Beihang University, Beijing, China, in 2020. 
		He is currently working toward the Ph.D. degree in control theory and control engineering with Beihang University.
		His research interests include data-driven adaptive control, safety-critical control, and distributed cooperative control of multi-agent systems.
		
		He received the Zhang Si-Ying Outstanding Youth Paper Award in the 36th Chinese Control and Decision Conference (CCDC 2024), and the Best Student Paper Award in the 18th International Conference on Control, Automation, Robotics and Vision (ICARCV 2024).
	\end{IEEEbiography}
	
	\begin{IEEEbiography}
		[{\includegraphics[width=1in,height=1.25in,clip,keepaspectratio]{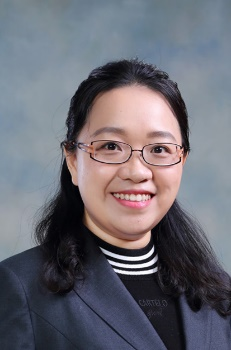}}] 
		{Wei Wang} received her B.Eng degree in Electrical Engineering and Automation from Beihang University (China) in 2005, M.Sc degree in Radio Frequency Communication Systems with Distinction from University of Southampton (UK) in 2006 and Ph.D degree from Nanyang Technological University (Singapore) in 2011. From January 2012 to June 2015, she was a Lecturer with the Department of Automation at Tsinghua University, China. Since July 2015, she has been with the School of Automation Science and Electrical Engineering, Beihang University, China, where she is currently a Full Professor. Her research interests include adaptive control of uncertain systems, distributed cooperative control of multi-agent systems, secure control of cyber-physical systems. 
		
		Prof. Wang received Zhang Si-Ying Outstanding Youth Paper Award in the 25th Chinese Control and Decision Conference (2013), the First Prize of Science and Technology Progress Award by Chinese Institute of Command and Control (2018), and the First Prize of Natural Science Award by Chinese Institute of Simulation (2025). She is the Principle Investigator for a number of research projects including the Distinguished Young Scholars of the National Natural Science Foundation of China (2021-2023). She has been serving as Associate Editors for the IEEE Transactions on Industrial Electronics, ISA Transactions, IEEE Open Journal of Circuits and Systems, Journal of Control and Decision, Journal of Command and Control.  
	\end{IEEEbiography}
	
	\begin{IEEEbiography}
		[{\includegraphics[width=1in,height=1.25in,clip,keepaspectratio]{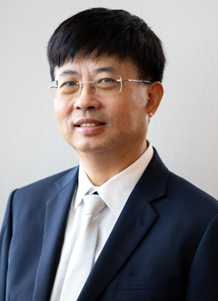}}] 
		{Changyun Wen} received the B.Eng. degree from Xi'an Jiaotong University, China, in 1983 and the Ph.D. degree from the University of Newcastle, Australia in 1990. From August 1989 to August 1991, he was a Postdoctoral Fellow at University of Adelaide, Australia. Since August 1991, he has been with Nanyang Technological University, Singapore, where he is a Full Professor. His main research activities are in the areas of control systems and applications, cyber-physical systems, smart grids, complex systems and networks. As recognition of the scientific impact of his publications in these areas, he was listed as a Highly Cited Researcher by Clarivate for the years of 2020,2021 and 2022.
		
		Prof. Wen is a Fellow of IEEE and Fellow of the Academy of Engineering, Singapore. He was a member of IEEE Fellow Committee from January 2011 to December 2013 and a Distinguished Lecturer of IEEE Control Systems Society from 2010 to 2013. Currently he is the co-Editor-in-Chief of IEEE Transactions on Industrial Electronics, Associate Editor of Automatica (from Feb 2006) and Executive Editor-in-Chief of Journal of Control and Decision. He also served as an Associate Editor of IEEE Transactions on Automatic Control from 2000 to 2002, IEEE Transactions on Industrial Electronics from 2013 to 2020 and IEEE Control Systems Magazine from 2009 to 2019. He has been actively involved in organizing international conferences playing the roles of General Chair (including the General Chair of IECON 2020 and IECON 2023), TPC Chair (e.g. the TPC Chair of Chinese Control and Decision Conference since 2008) ect. 
		
		He was the recipient of a number of awards, including the Prestigious Engineering Achievement Award from the Institution of Engineers, Singapore in 2005, and the Best Paper Award of IEEE Transactions on Industrial Electronics in 2017. 
	\end{IEEEbiography}
	
	\begin{IEEEbiography}
		[{\includegraphics[width=1in,height=1.25in,clip,keepaspectratio]{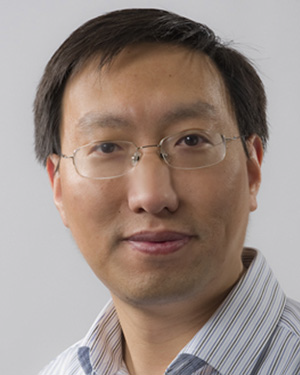}}] 
		{Jinhu L\"u} received the Ph.D. degree in applied mathematics from the Academy of Mathematics and Systems Science, Chinese Academy of Sciences, Beijing, China, in 2002.
		He was a Professor with RMIT University, Melbourne, VIC, Australia, and a Visiting Fellow with Princeton University, Princeton, NJ, USA. 
		Currently, he is a Vice-President for scientific research with Beihang University, Beijing, China. 
		He is/was the Chief Scientist of the National Key Research and Development Program of China and the Leading Scientist of the Innovative Research Groups, National Natural Science Foundation of China. 
		His current research interests include cooperation control, industrial internet, complex networks, and big data.
		
		Dr. L\"u was a recipient of the prestigious Ho Leung Ho Lee Foundation Award in 2015, the National Innovation Competition Award in 2020, the State Natural Science Award three times from the Chinese Government in 2008, 2012, and 2016, respectively, the Australian Research Council Future Fellowships Award in 2009, the National Natural Science Fund for Distinguished Young Scholars Award, and the Highly Cited Researcher Award in engineering from 2014 to 2020, and 2022 to 2023. 
		He was the General Co-Chair of the 43rd Annual Conference of the IEEE Industrial Electronics Society in 2017. 
		He is/was an Editor in various ranks for 16 SCI journals, including the Co-Editor-in-Chief of IEEE Trans. Ind.Inform.
		He is the Fellow of IEEE/CAA/ORSC/CICC.
	\end{IEEEbiography}

\end{document}